\let\warning\@latex@warning
\newcommand{\yesnumber}{\refstepcounter{equation}\tag{\theequation}}
\newcommand{\projector}[1]{\ketbra{#1}{#1}}
\newcommand{\eqq}{\equiv}
\newcommand*{\eps}{\varepsilon}
\DeclareMathOperator{\id}{id}
\newcommand{\puremultigate}[2]{*+<1em,.9em>{\hphantom{#2}} \POS [0,0]="i",[0,0].[#1,0]="e",!C *{#2},"e"+UR;"e"+UL **\dir{-};"e"+DL **\dir{-};"e"+DR **\dir{-};"e"+UR **\dir{-},"i"}
\newcommand{\dw}{\ar @{.} [0,-1]}
\let\oldvartheta\vartheta
\let\vartheta\theta
\let\theta\oldvartheta 
\renewcommand{\lim}{\qopname\relax m{lim\vphantom{p}}}
\renewcommand{\inf}{\qopname\relax m{inf\vphantom{p}}}
\renewcommand{\limsup}{\mathop {\mathpalette \varlimsup@ {\vphantom{p}}}\nmlimits@}
\let\olddownarrow\downarrow
\renewcommand{\downarrow}{{\scriptscriptstyle \olddownarrow}}
\theoremstyle{plain}
\newtheorem {theorem}{Theorem} 
\newtheorem*{theorem*}       {Theorem}
\newtheorem {lemma}[theorem]{Lemma} 
\newtheorem*{lemma*}      {Lemma}
\newtheorem {corollary}[theorem]{Corollary} 
\newtheorem*{corollary*}      {Corollary}
\newtheorem*{definition*}      {Definition}
\theoremstyle{remark}
\newtheorem*{remark*}{Remark}
\newtheorem*{note*}{Note}
\newtheorem {observation}[theorem]{Observation} 
\newtheorem*{observation*}      {Observation}
\begin{document}

\title{Upper bounds on device-independent quantum key distribution}
\author{Matthias Christandl}
\email{christandl@math.ku.dk}
\affiliation{QMATH, 
    Department of Mathematical Sciences, 
    University of Copenhagen, 
    Universitetsparken 5, 2100 Copenhagen \O, Denmark}
\author{Roberto Ferrara}    
\email{roberto.ferrara@tum.de}
\affiliation{Lehr- und Forschungseinheit f\"ur Nachrichtentechnik,
Technische Universit\"at M\"unchen, 80339 Munich, Germany}
\author{Karol Horodecki}
\email{khorodec@inf.ug.edu.pl}
\affiliation{Institute of Informatics, 
	National Quantum Information Centre, Faculty of Mathematics, 
	Physics and Informatics, University of Gda{\'n}sk, 80-308 Gda{\'n}sk, Poland}
	\affiliation{International Centre for Theory of Quantum Technologies,
University of Gda{\'n}sk, 80-952 Gda{\'n}sk, Poland}

\begin{abstract}
Quantum key distribution (QKD) is a method that distributes a secret key to a sender and a receiver by the transmission of quantum particles (e.g.
photons). 
Device-independent quantum key distribution (DIQKD) is a version of QKD with a stronger notion of security, in that the sender and receiver base their protocol only on the statistics of input and outputs of their devices as inspired by Bell's theorem.
We study the rate at which DIQKD can be carried out for a given bipartite quantum state distributed between the sender and receiver or a quantum channel connecting them. 
We provide upper bounds on the achievable rate going beyond upper bounds possible for QKD. In particular, we construct states and channels where the QKD rate is significant while the DIQKD rate is negligible. This gap is illustrated for a practical case arising when using standard post-processing techniques for entangled two-qubit states. 
\end{abstract}
\date{\today}
\maketitle


\paragraph{Introduction.}

Quantum key distribution (QKD) offers the possibility to distribute a perfectly secure key among two parties via quantum communication~\cite{BB84}. 
The parties can later use this key for perfectly secure communication. 
Whereas theoretically, the security of QKD is very well understood, the experimental implementations remain challenging. 
This is because it is difficult to verify that the theoretical models and the experimental implementations fit together. 
In practice, the exploitation of differences between model and implementation are known as side channels, and it is here that quantum communication opens a can of worms~\cite{Makarov_2009}. 
QKD is thus dependent on the exact known specifications of the devices used: it is \emph{device-dependent}.

Ekert's scheme for QKD merely verified by the violation of a Bell inequality opens up the possibility of \emph{device-independent} QKD (DIQKD), as the violation of a Bell inequality can be argued by the obtained correlations alone (under the assumption of appropriate timing of the signals). 
In recent years, DIQKD has been put on a firm footing~\cite{E91,BHK,QDICollective,DIMayersYao,Vazirani-Vidick,ArnonFriedman2018,TowardsDIQKD,LHFB-Hensen2015,LHFB-Giustina2015,LHFB-Shalm2015, ArnonFriedman2018}. 
However, it should be said that a device purchased from an adversarial vendor emphasizes other types of loopholes, for instance, the hidden storage and later unwanted release of the bits generated~\cite{BCK13,CL19,Bennettprivatecomm}.

Whereas security proofs for both QKD and DIQKD are involved, e.g., since channel noise needs to be estimated and the eavesdropper might carry out non-i.i.d.\ attacks, upper bounds on the optimal rate can be obtained with a Shannon-theoretic approach. 
In the case of QKD, the corresponding rates are the key rate $K(\rho)$ of a bipartite state $\rho$ shared among the communicating parties~\cite{DW05,HHHO05} and the private capacity $\mathcal{P}(\Lambda)$ of a quantum channel $\Lambda$~\cite{Devetak2005}. 
Interestingly, these rates can indeed be achieved in the actual QKD setting, e.g.~by use of the post-selection technique~\cite{Christandl2009}. For the first upper bounds on DIQKD rates see \cite{KWW20}.

In this paper, we consider the natural DIQKD variants $K^\text{DI}(\rho)$ and $\mathcal{P}^\text{DI}(\Lambda)$. 
Since DIQKD has a higher security demand than QKD, one has the trivial bounds
$K^\text{DI}(\rho)\leq K(\rho)$ and $\mathcal{P}^\text{DI}(\Lambda) \leq \mathcal{P}(\Lambda)$.

Our main results are upper bounds on the DIQKD rates that go beyond the bounds implied by QKD, thereby pointing out a fundamental difference between standard and device-independent QKD. We illustrate the bounds with an example where the QKD rate is constant but, remarkably, the DIQKD rate is vanishingly low. 
We will also discuss a practical example with an explicit gap. 
In the following we introduce the setting before presenting and illustrating the main results.

\paragraph{Communication rates in quantum cryptography.}
\hfill\\
Every QKD and DIQKD protocol consists of preparing, exchanging and measuring quantum particles, followed by the post-processing of the measurement data resulting in the final key. Note that these are not necessarily separate stages, but may be interwoven. 
Most QKD protocols, however, can be modeled as an establishment of $n$ independent copies of a bipartite quantum state $\rho$ between the communicating parties Alice and Bob, followed by a protocol consisting of local operations and public communication (LOPC).
For simplicity, we will assume that all Hilbert spaces are finite-dimensional. 
This protocol results in a final key secret against an eavesdropper holding the purification of $\rho^{\otimes n}$ and a copy of all classical communication. 
When maximizing over possible LOPC protocols, one obtains the key rate $K(\rho)$.

If Alice and Bob have control over their measurement apparatuses, there exist effective methods to verify that they indeed have $n$ independent copies of $\rho$, even if the adversary interferes with the quantum communication. 
Thus $K(\rho)$ also has the practical relevance as a QKD rate and not only information-theoretic meaning~\cite{Christandl2009}.

Instead of modeling the distribution of the quantum particles by a density matrix $\rho$, one might also model it as arising from a quantum channel $\Lambda$, a completely-positive trace-preserving linear map. 
This scenario, which results in the private capacity $\mathcal{P}(\Lambda)$ is more general but more cumbersome to treat. 
Therefore, we will focus on the density matrix paradigm, yet also state our results in the channel paradigm. 

Note that in most practical protocols, in QKD but especially in DIQKD, measurements are performed on single copies of $\rho$ by POVMs $\{A_a^x\}_a$ and $\{B_b^y\}_b$. 
We denote the measurement choices by $x$ and $y$, respectively, and the outcomes by $a$ and $b$. 
If an eavesdropper does not interfere with the measurement, this results in $n$ independent and identical samples of the conditional probability distribution 
\[p(a, b|x, y) \coloneqq \tr [(A_a^x \otimes B_b^y) \rho].\]
Classical post-processing then leads to the final secret key against an eavesdropper who holds the purification of the state $\rho^{\otimes n}$ as well as a transcript of all public communication. 
We note that the distribution of the measurement choice $p(x, y)=p(x)p(y)$ is usually fixed (e.g.~uniform) so that the samples are actually drawn from the distribution $p(a, b, x, y)=p(a, b|x, y) p(x,y)$, rather than from $p(a, b|x, y)$. 
The choice of measurements and their distribution is denoted by $\mathcal{M}$. 
We denote the corresponding QKD rate when maximizing over POVMs by $K^{(1)}(\rho)$, indicating that the measurement acts on one copy of the state. Note that 
\[K^{(1)}(\rho) \leq K(\rho).\]

In DIQKD, in contrast to QKD, Alice and Bob know neither the measurement operators performed by their apparatus nor the states measured. 
In particular, even though they can verify that they have $n$ independent copies of $p(a, b|x, y)$, it might not be possible to infer that the underlying quantum process respects the independent nature. Namely, it might not be possible to prove that the measurements $\{A_a^x\}_a$ and $\{B_b^y\}_b$ were indeed carried out independently on independent copies of $\rho$, rather than some more complicated procedure. 
Even assuming that the device indeed performed $n$ identical independent measurements on an identical quantum state, leading to what we call the DIQKD rate
\[K^\text{DI}(\rho)\] 
leaves open the possibility for different measurements as we will explain in the following. We emphasize that it is possible, yet unproven, that this rate can be achieved in a realistic DIQKD setting, as recent research indicates~\cite{ArnonFriedman2018,Rotem-phd} (cf. research on quantum de Finetti theorems \cite{Christandl2007one, renner2005security, Renner07}). 
Since knowing less about the apparatus can only decrease the rate, we have 
\begin{equation}
    K^\text{DI}(\rho)\leq K^{(1)}(\rho)\leq K(\rho).
    \label{eq:singlecopy-povm}
\end{equation}
In the following, we will provide upper bounds that improve on this bound and exploit them to present a gap between $K^\text{DI}(\rho)$ and $K(\rho)$.

\paragraph{Upper bounds on DIQKD.}

Assume now that the POVMs $\{A_a^x\}_a$ and $\{B_b^y\}_b$ are optimal for $K^\text{DI}(\rho)$ (such POVMs exist by compactness, since the Hilbert spaces are finite-dimensional). 
Note that there might exist a different state $\rho'$ and different measurements $\{{A'}_a^x\}_a$ and $\{{B'}_b^y\}_b$ leading to the same distribution
\[p(a, b|x, y) \coloneqq \tr [(A_a^x \otimes B_b^y) \rho]= \tr [({A'}_a^x \otimes {B'}_b^y) \rho'].\]
In this case, we write $(\mathcal{M}, \rho)\equiv (\mathcal{M}', \rho')$. 
We thus see that the maximal achievable key rate for $\rho$ is also achievable for $\rho'$.
We thus have
\[K^\text{DI}(\rho)\leq K^\text{DI}(\rho').\]

Combining this bound with \cref{eq:singlecopy-povm} we find that 
\begin{align}
    \label{eq:di-upperbound}
    K^\text{DI}(\rho)\leq \sup_{\mathcal{M}} \inf_{\substack{(\mathcal{M}', \rho'): \\ (\mathcal{M}, \rho) \equiv (\mathcal{M}', \rho')}} K(\rho').
\end{align}
A proof based on the formal definitions of the involved rates is given in the Supplementary Material.

We will now give a construction of examples, where $(\mathcal{M}, \rho)\equiv (\mathcal{M}', \rho')$. For this, note that transposing Bob's system does not change the probabilities
\[\tr [(A_a^x \otimes B_b^y) \rho]= \tr [(A_a^x \otimes (B_b^y)^T) \rho^\Gamma].\]
Here, $T$ denotes the transpose and $\Gamma$ the partial transpose. 
The density matrix $\rho$ can lose the property of being positive semi-definite  after partial transposition. 
For the equation above to be valid, we thus require $\rho^{\Gamma} \geq 0$, in which case $\rho$ is said to be PPT (Positive under Partial Transposition). 
PPT states are the only known examples of {bound-entangled} states, that is, entangled states from which no pure entanglement can be extracted at a positive rate~\cite{HHH98}.
Still, they form a rich class of states, including states from which a secret key can be extracted at positive rates~\cite{HHHO05,HHHO09} (similar results are known for channels~\cite{HHHLO08,UncondPRL,AlmostZeroChannel}). 
There are even examples of PPT entangled states that violate Bell inequalities~\cite{Vrtesi2014}. 
When restricting to PPT states $\rho$, we therefore find
\begin{equation}
\label{eq:pptupperbound}
K^\text{DI}(\rho) \leq \min\{K(\rho^{\Gamma}),K(\rho)\}.
\end{equation}
To see the significance of the above result, it is important to note, that there are PPT states for which $K(\rho)$ is high, but $K(\rho^{\Gamma})$ is low~\cite{BCHW15,HHHO05}. 
This implieas a gap via the above inequality and therefore a fundamental difference between device-dependent and device-independent secrecy.

We now provide an example of states exhibiting this gap. 
Aiming at constructions with relatively few qubits, we further develop the results of~\cite{BCHW15,HPHH08} (see also the Supplementary Material).
In general, this gap holds for all those examples of PPT states that are close to private bits,
but that after partial transposition become close to separable states~\cite{HPHH08,HHHO05,karol-PhD,HM15,BCHW15}.

\paragraph{Examples.}

We consider the $2d \times 2d$ dimensional states from~\cite{BCHW15} which are of the form
\begin{align*}
\rho_d &\coloneqq \frac{1}{2}
\begin{bmatrix}
	(1-p)\sqrt{XX^{\dagger}} & 0 & 0 & (1-p) X \\
	0 & pY &  0 & 0 \\
	0 &  0 & pY & 0 \\
	(1-p) X^{\dagger} & 0 & 0 & (1-p) \sqrt{X^{\dagger}X}
\end{bmatrix},
\end{align*}
with $X$ and $Y$ to be chosen later, satisfying $\|X\|_1 = \|Y\|_1=1$.
The qubit systems are called the \emph{key} systems and the qudits are called the \emph{shield} systems.
By the privacy-squeezing technique of~\cite{HHHO09}, this state has at least as much key as the key obtained by measuring 
\begin{equation*}
\rho_\mathrm{Bell} \coloneqq \frac{1}{2}
\begin{bmatrix}
	(1-p) & 0 & 0 & (1-p) \\
	0 & p &  0 & 0 \\
	0 &  0 & p & 0 \\
	(1-p) & 0 & 0 & (1-p)
\end{bmatrix},
\end{equation*}
which is a Bell diagonal state.
A lower bound on this key is given by the Devetak-Winter protocol~\cite{DW05},
which was also derived in~\cite[Eq.~(22)]{ABBBMM06} and reads
\begin{equation*}
K_D(\rho_\mathrm{Bell})  \geq  1 - H\left( \left(1-p,\frac{p}{2},\frac{p}{2} \right) \right),
\end{equation*}
where $H$ is the Shannon entropy. 

In order for $\rho_d$ to be PPT, we choose  $Y = \frac{1}{d} \sum_{i=0}^{d-1} \projector{ii}$ and  $X = 1/(d \sqrt{d})\sum_{i,j=0}^{d-1}u_{ij} \ketbra{ij}{ji}$, with $u_{ij}$ being complex numbers of modulus $\frac{1}{ \sqrt{d}}$ such that $U=\sum_{ij}u_{ij}\ketbra{i}{j}$ is a unitary matrix~\cite{HPHH08}. 
In particular, one can take $U$ to be the Fourier transform or (if $d$ is a power of two) a tensor power of the Hadamard matrix. 
We also choose $p = \frac{1}{ \sqrt{d} +1}$.
To conclude, we derived the lower bound $K(\rho_d) \geq 1- H\left(\left(\sqrt{d}, \frac12, \frac12\right)/(\sqrt{d}+1)\right)$, while the upper bound $K(\rho_d^\Gamma) \leq \frac{1}{\sqrt{d}+1}$  was computed as part of~\cite[Supplementary material, Corollary~40]{ChristandlFerrara}. See also Theorem 2 in the Supplemental Material, where \cite{lock-ent} is used.

A quick check reveals that $K(\rho_d) > K(\rho_d^\Gamma)$ for all $d\geq 24$, i.e.~for all states $\rho_d$ with at least three qubits and a qutrit in the shield at each side.
In particular, $\rho_{2^{5}}$ is thus a $12$ qubit state, which proves the separation between the device-dependent and the device-independent key.
For $20$ qubits of shield per side, we arrive at a state which has $K(\rho_{2^{20}})\geq 0.98$ and $K(\rho_{2^{20}}^\Gamma) \leq 1/(2^{10} +1)\sim 10^{-3}$.

\begin{remark*}
At first, this does not seem to be a practical example. Note, however that using the common subroutine advantage distillation on $\rho_d^{\otimes n}$ yields the same lower and upper bounds as $\rho_{d^n}$. Our results thus directly concern the amount of key distilled after advantage distillation \cite{Maurer93} on the key part of 20 copies of $\rho_2$ if we make sure that the other 20 qubits of shield do not get in the hands of the eavesdropper. 
In particular, we see that whereas in QKD, the obtained bit in this setting is secure, the upper bound tells us that this bit is not secure in a device-independent setting. 
Therefore the state, and particularly any of its parts, including the shield, cannot be tested independently of the device.
The quantum operation of removing a system (in our case, the shield) from the reach of the eavesdropper is based on trust in the quantum memories and cannot be certified by classical correlations alone.
\end{remark*}

\paragraph{Device-Independent Entanglement Measures.}
Implicit in the upper bound on $K(\rho^\Gamma)$ was the use of the relative entropy of entanglement $E_\mathrm{r}$.
In this context, it is therefore natural to introduce \emph{device-independent entanglement measures}. 
In analogy to \cref{eq:di-upperbound}, for any entanglement measure $E$ we define 
\begin{equation}
E^{\downarrow}(\rho)\coloneqq\sup_\mathcal{M} \inf_{(\mathcal{N},\sigma)\equiv (\mathcal{M},\rho)} E(\sigma) \leq E(\rho),
\end{equation}
where we use the down arrow to indicate the optimization over Eve's implementation of the device, in close analogy to the down arrow used in the intrinsic information~\cite{MW99}, where also an optimization over Eve’s action is carried out.
Notice that $E^{\downarrow}(\projector\psi) = E(\projector\psi)$ because all pure states are self-testable~\cite{coladangelo2017all}.
If $E$ is either the distillable key $K$ or an upper bound on it, it then follows that
\begin{equation}
 K^\text{DI}\leq {K}^{\downarrow} \leq {E}^{\downarrow}\leq E.
\label{eq:forall-sq-bound}
\end{equation}
In particular, for $E$ being the squashed entanglement $E_\mathrm{sq}$ or the relative entropy of entanglement $E_\mathrm{r}$, we obtain 
\begin{align*}
K^\text{DI} \leq \min\{E_\mathrm{sq}^{\downarrow}, E_\mathrm{r}^{\downarrow}\}.
\end{align*}
In the example above, the relative entropy bound was implicitly used together with
$E_\mathrm{r}^{\downarrow}(\rho) \leq E_\mathrm{r}(\rho^\Gamma)$ for PPT states $\rho$.
Note that fixing a choice of ${\cal M}$ in $E^{\downarrow}$ also produces a device-independent entanglement measure of a distribution.

\paragraph*{Device-Independent Private Capacity.}
The ideas presented so far can also be applied to the private capacity $\mathcal{P}(\Lambda)$ of a channel $\Lambda$. 
They are thus useful in the most general setting, where, for instance, the optical fiber itself is modelled and not only the states produced when using the optical fiber. 

There are different natural versions of the private capacity depending on whether assistance by public communication is restricted to being one-way ($\mathcal{P}_1$) or whether general two-way communication is allowed ($\mathcal{P}_2$). In the information-theoretic setting, the setting without publication communication ($\mathcal{P}_0$) is also meaningful. 
With increased power comes increased rate, and thus $$\mathcal{P}_0 \le \mathcal{P}_1 \le \mathcal{P}_2.$$

The device-independent private capacity also has three analogous versions ${\mathcal{P}}_i^\text{DI},\ i=0,1,2$ corresponding to whether two-way, one-way or no public communication is given to Alice and Bob \emph{outside} the devices. 
Additionally, there will be different classes of adversarial devices, depending on whether we consider adversaries that, besides the quantum channel, use two-, one- or no-way public communication \emph{inside} the devices to produce the state to be measured.
Arguably, allowing less classical communication in the device than the one used by Alice and Bob is physically unsound, but can be used as a mathematical tool to reach some upper bounds.
Thus, we can restrict ourselves to adversarial devices that use no public communication, which can only make the rates larger.
Similarly, we also consider i.i.d.\ devices that do not use memory between the input states of different channel uses.
Again, these are not realistic implementations of a device delivered by an adversary but merely a tool to provide upper bounds.
Indeed, in practical scenarios the provided devices will often be from a cooperating rather than an adversarial party.
These devices will use quantum memories at Alice and Bob and even classical communication \emph{outside} the classical input-output rounds where communication is allowed, to maximize the key.
In the Supplementary Material, we explore the various rates obtained when considering different classes of devices allowed to the adversary and the different variants of public communications that are allowed to the intended parties.

We now introduce the class of i.i.d.\ devices that use neither public communication nor memory between channel uses.
A device for a channel $\Lambda$ from Alice to Bob is given by a tuple $(\mathcal{M}, \rho, \Lambda)$ of measurements $\mathcal{M}$ on Alice and Bob's side, a bipartite state $\rho$ (half of which is the input to the channel), and a channel $\Lambda$. 
The conditional probability distribution is then obtained, as shown in \cref{fig:IDI0}, via
\begin{equation*}
    p(ab|xy) = \tr[(\id \otimes \Lambda)(\rho) \cdot M_a^x\otimes M_b^y].
\end{equation*}
We again write $(\mathcal{N},\sigma, \Lambda' )\equiv(\mathcal{M},\rho, \Lambda)$ for devices that produce the same distribution.
As in the case of entanglement measures for states, we can use any channel entanglement measure $\mathcal{E}(\Lambda)$ to define a device-independent version 
\begin{equation}
\mathcal{E}^{\downarrow}(\Lambda) \equiv \mathcal{E}^{\downarrow_0}(\Lambda)
\coloneqq 
\sup_{\mathcal{M},\rho} \,
\inf_{{%
    {(\mathcal{N},\sigma, \Lambda' )\equiv(\mathcal{M},\rho, \Lambda) }
    }} 
\mathcal{E}(\Lambda')    
\end{equation}
(see~\cite{TGW14bound,Pirandola2017} for the channel generalizations of $E_\mathrm{sq}$ and $E_\mathrm{r}$ respectively, as well as~\cite{christandl2017relative} for the use of the latter). See also \cite{SRS08}.

\begin{figure}
    \begin{gather*}
    \Qcircuit @C=.5em @R=.3em{
        &&& &\push{x_1}&\puremultigate{1}{\ }\cw&\push{\,a_1}\cw&
        &&&&&\push{x_2}&\puremultigate{1}{\ }\cw&\push{\,a_2}\cw&
        \\
        &&& & \qw[-3]&\ghost{\ }&&
        &&&&& \qw[-4]&\ghost{\ }&&
        \\
        \push{\sigma} \ar@{-}[1,1]\ar@{-}[-1,1]&&&&&&&
        \push{\sigma} \ar@{-}[1,1]\ar@{-}[-1,1]&&&&&&&&
        \\
        &&\gate{\Lambda}&\qw&\qw&\multigate{1}{\ }&&
        &&\gate{\Lambda}&\gate{\theta}&\gate{\theta}&\qw&\multigate{1}{\ }&&
        \\
        &&& &\push{y_1}&\pureghost{\ }\cw&\push{\,b_1}\cw&
        &&&&&\push{y_2}&\pureghost{\ }\cw&\push{\,b_2}\cw&
        }
    \end{gather*}
    \caption{\label{fig:IDI0} An i.i.d.\ device with no public communication and no memory in channel-based DIQKD (left) and the introduction of the partial transpose (right). 
    We introduce this class of devices as $IDI_0$ in the Supplementary Material.
    }
\end{figure}
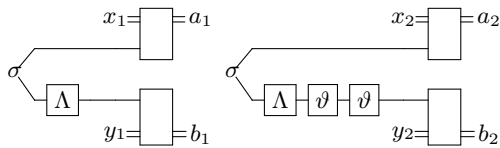

The above quantities give rise to quantities $\mathcal{P}_i^{\downarrow}$ which will be upper bounds on the actual device-independent privacy capacities. When combining them with an upper bound $ \mathcal{P}_i \leq \mathcal{E}$ we obtain (here illustrated with $i=2$):
\begin{align}
    \mathcal{P}_2^\text{DI} (\Lambda)  &
    \leq \mathcal{P}_{2}^{\downarrow}(\Lambda) 
    \leq \mathcal{E}    ^{\downarrow}(\Lambda).
\end{align}

Also here, we can now apply the partial transpose idea. In order to do so, we introduce the partial transpose map $\theta$ ($\theta(\rho) = \rho^ T$).
If a channel $\Lambda$ is such that $\theta\circ\Lambda$ is also a channel (i.e. $\Lambda$ completely positive and completely co-positive), then any device for $\Lambda$ can be transformed into a device for $\theta\circ\Lambda$ with the same statistics as shown in \Cref{fig:IDI0}. 
The consequence is analogous to \Cref{eq:pptupperbound} (for i=2): %
\begin{align}
    \mathcal{P}_2^\text{DI} (\Lambda) 
    \leq \mathcal{P}_{2}(\theta\circ\Lambda) 
    \leq \mathcal{E}    (\theta\circ\Lambda).
\end{align}
This bound can be used to show that there is a gap between the private capacity and the device-independent private capacity, as there exist examples of channels for which $\mathcal{P}_2(\Lambda)$ is large, but $\mathcal{P}_2(\theta\circ\Lambda)$ small~\cite{christandl2017relative}.

\paragraph*{Discussion.}
We have derived general upper bounds on the generation of device-independent key.
For the sake of completeness, we provide a detailed definition of device-independent key rates~\cite{ArnonFriedman2018,TowardsDIQKD} in the Supplemental Material.  
Using the upper bounds, we have shown that a gap can exist between the device-dependent (or standard) and device-independent distillable key. 
In fact, the gap can be shown to be maximally large, meaning that some states and channels support secret key generation, but at most a negligible amount of device-independent secret key. 
The construction has been obtained for a class of states and channels that have zero distillable entanglement or quantum capacity and that are known as PPT states or channels. 
We leave it as an interesting challenge to lower the dimension of such examples.

Notice that the partial transpose approach has previously led to upper bounds on Bell non-locality in terms of faithful measures of entanglement~\cite{HM15}, taking inspiration on upper bounds on key repeater rates~\cite{BCHW15}. 
In~\cite{ChristandlFerrara}, bounds on the key repeater rate were given beyond the use of the partial transpose idea, leading to a connection with distillable entanglement.
We hope that a similar result can be obtained connecting the device-independent distillable key and private capacity to the distillable entanglement and the quantum capacity, respectively~\cite{ChristandlFerrara}, potentially leading to bounds for non-PPT states and channels.

One may regard the gap $\Delta K(\rho,\mathcal{M})]\coloneqq K(\rho)-K^\text{DI}(\rho,\mathcal{M})$ as a measure of \emph{trust} towards a device $(\rho,\mathcal{M})$ (and analogously for quantum channels). 
For example, $\Delta K$ is zero for the singlet with CHSH testing, meaning that this device does not need to be trusted, and the same may hold for all pure states.
However, this is not the case for some bound-entangled states for which our results prove that $\Delta K$ is non-zero. 
Obtaining similar results for the multipartite case of conference key agreement is an interesting open problem.

\paragraph*{Note.}
After concluding the research on this article, we became aware of the independent but closely related work~\cite{AFL20}, where a conjecture is formulated that bound-entangled states have \emph{zero}
device-independent key against quantum adversary~\cite{AFL20} (see in this context the related results in case of non-signaling adversaries~\cite{WDH19}). Our work can be regarded as supporting evidence for this conjecture. 

\paragraph*{Acknowledgements.}
MC acknowledges financial support from the European Research Council (ERC Grant
Agreement No.~337603 and 81876) and VILLUM FONDEN via the QMATH Centre of Excellence (Grant No.~10059).
RF was supported by the Bundesministerium f\"ur Bildung und Forschung (BMBF) through Grant 16KIS0857 and thanks Jed Kaniewski for useful discussions.
KH acknowledges the support of the National Science Centre grant Sonata Bis 5 UMO-2015/18/E/ST2/00327, and partial support by the Foundation for Polish Science through the IRAP project, co-financed by the EU within the Smart Growth Operational Programme (contract no.~2018/MAB/5).


%

\section{Lower bounds on key}

In this Section, we derive and discuss in more detail the previous results and the states that we use to derive our lower bounds on the distillable key, used in the example in the main text. 

We parametrize Bell diagonal states as follows
\begin{equation*}
\rho_\mathrm{Bell} \coloneqq \frac{1}{2\alpha+2\beta}
\begin{bmatrix}
\alpha & 0 & 0 & \gamma \\
0 & \beta & \delta & 0 \\
0 & \delta & \beta & 0 \\
\gamma & 0 & 0 & \alpha 
\end{bmatrix}.
\end{equation*}
A lower bound on the distillable key obtained from measuring Bell diagonal states states was derived in~\cite[Eq.~(22)]{ABBBMM06} using Maurer's advantage distillation~\cite{Maurer93} and the Devetak-Winter~\cite{DW05} protocols.
These states were considered in \cite{ABBBMM06} where they where parametrized as $\lambda_1=\alpha+\gamma$, $\lambda_2=\alpha-\gamma$, $\lambda_3=\beta+\delta$, $\lambda_4=\beta-\delta$.
The values $\lambda_1,\dots,\lambda_4$ are the actual eigenvalues of the state and the probabilities of the Bell states.
The lower bound found in~\cite[Eq.~(22)]{ABBBMM06} is 
\begin{align*}
    K(\rho_\mathrm{Bell}^{\otimes m})
    &\geq
    1 -h(\varepsilon) - 
    \\
    &-(1-\varepsilon) h\left( \frac{1-\lambda_\mathrm{eq}^m}{2} \right)
        -\varepsilon  h\left( \frac{1-\lambda_\mathrm{dif}^m}{2} \right)
    \\
    &= 1 - h(\varepsilon)
    \\
    &-(1-\varepsilon) h\left( \frac{\alpha^m - \gamma^m}{2 \alpha^m} \right)
        -\varepsilon  h\left( \frac{\beta ^m - \delta^m}{2 \beta^m} \right)
    \yesnumber\label{eq:ABBBMM06.22}
    \\
    &\eqqcolon K_\mathrm{AD-DW}^m(\rho_\mathrm{Bell})
\end{align*}
where $h(x)=-x \log_2 x - (1-x)\log_2 (1-x)$ is the binary entropy and 
\[ \varepsilon =\frac{(\lambda_3+\lambda_4)^m}
  {(\lambda_1+\lambda_2)^m+(\lambda_3+\lambda_4)^m} = \frac{\beta^m}{\alpha^m+\beta^m}
\]
\begin{align*}
 \lambda_\mathrm{eq} &= \frac{|\lambda_1-\lambda_2|}{\lambda_1+\lambda_2} = \frac{\gamma}{\alpha},
 &
 \lambda_\mathrm{dif} &= \frac{|\lambda_3-\lambda_4|}{\lambda_3+\lambda_4} = \frac{\delta}{\beta}.
\end{align*}
For convenience we named the achieved rate above $K_\mathrm{AD-DW}^m(\rho_\mathrm{Bell})$, where AD-DW stands for Advantage-Distillation Devetak-Winter.
The bound can be simplified by recalling the following property of the entropy ($H((p_1, \dots, p_d)) = - \sum p_i \log_2 p_i$):
\begin{align*}
    &h(p) + p h(q) + (1-p)h(r) \\
    &= H((pq,p(1-q), (1-p)r, (1-p)(1-r) )
\end{align*}
for all $p,q,r\in[0,1]$ ($p$ is the probability of a control bit for another bit in either probability $q$ or $r$).
We thus have that \cref{eq:ABBBMM06.22} can be rewritten to 
\begin{align*}
&K_\mathrm{AD-DW}^m(\rho_\mathrm{Bell})  =
\nonumber\\
&  1 - H\left(\frac{(\alpha^m+\gamma^m,\alpha^m-\gamma^m,\beta^m+\delta^m,\beta^m-\delta^m)}{2\alpha^m+2\beta^m}\right).
\end{align*}
Note that the bound is invariant under multiplication by a constant of $\alpha$, $\beta$, $\gamma$ and $\delta$.
Also notice that this lower bound is the same as the the single-copy lower bound of %
\begin{equation}
\rho_{\mathrm{Bell},m} \coloneqq \frac{1}{2\alpha^m+2\beta^m}
\begin{bmatrix}
\alpha^m & 0 & 0 & \gamma^m \\
0 & \beta^m & \delta^m & 0 \\
0 & \delta^m & \beta^m & 0 \\
\gamma^m & 0 & 0 & \alpha^m 
\end{bmatrix}.
\label{eq:Bell-m}
\end{equation}
Namely, we have
\begin{equation}
    K_\mathrm{AD-DW}^m(\rho_\mathrm{Bell}) = K_\mathrm{AD-DW}^1(\rho_{\mathrm{Bell},m}).
    \label{eq:advantage-equivalence}
\end{equation}

\begin{corollary}
Consider the following $2d \times 2d$ ``block Bell diagonal'' state 
\begin{equation}
\rho = 
\begin{bmatrix}
A_1   & 0    & 0   & C \\
0     & B_1  & D   & 0 \\
0     &D^\dag& B_2 & 0 \\
C^\dag& 0    & 0   & A_2
\end{bmatrix},
\end{equation}
such that $\|A_1\|_1=\|A_2\|_1\eqqcolon \alpha$, $\|B_1\|_1=\|B_2\|_1 \eqqcolon \beta$, $2\alpha+2\beta=1$.
Let $\gamma\coloneqq\|C\|_1$, $\delta\coloneqq\|D\|_1$, then for all integers $m$ 
\begin{align}
&K(\rho^{\otimes m})  \geq
\nonumber\\
&  1 - H\left(\frac{(\alpha^m+\gamma^m,\alpha^m-\gamma^m,\beta^m+\delta^m,\beta^m-\delta^m)}{2\alpha^m+2\beta^m}\right).
\label{eq:lemma-Belllowerbound}
\end{align}
\label{thm:ps-ad-dw}
\end{corollary}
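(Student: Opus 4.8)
The plan is to reduce the block-Bell-diagonal state to the scalar Bell-diagonal state $\rho_\mathrm{Bell}$ of the preceding discussion by privacy squeezing, and then to invoke the advantage-distillation Devetak--Winter bound \cref{eq:ABBBMM06.22} already established for Bell-diagonal states. Concretely, I would show that the privacy-squeezed state of $\rho$ is exactly the Bell-diagonal state with parameters $\alpha,\beta,\gamma,\delta$, and that privacy squeezing commutes with tensor powers, so that the privacy-squeezed state of $\rho^{\otimes m}$ is $\rho_\mathrm{Bell}^{\otimes m}$. Chaining $K(\rho^{\otimes m}) \ge K(\rho_\mathrm{Bell}^{\otimes m}) \ge K_\mathrm{AD-DW}^m(\rho_\mathrm{Bell})$ and identifying the last quantity with the right-hand side of \cref{eq:lemma-Belllowerbound} then finishes the argument.

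First I would recall the privacy-squeezing technique of \cite{HHHO09}. Writing $\rho$ in the computational basis of the two key qubits, its diagonal shield blocks are $A_1,B_1,B_2,A_2$ and its off-diagonal shield blocks are $C$ (coupling $\ket{00}\leftrightarrow\ket{11}$) and $D$ (coupling $\ket{01}\leftrightarrow\ket{10}$). A controlled twisting $U=\sum_{ij}\projector{ij}\otimes U_{ij}$ on the shield, controlled by the key registers, leaves the measured key statistics and Eve's conditional states invariant, so the distillable key of $\rho$ is at least that obtained from the two-qubit state found by applying $U$ and tracing out the shield. Since the diagonal blocks are positive semidefinite, $\tr A_1 = \|A_1\|_1 = \alpha$ and likewise $\tr B_1 = \beta$, so they contribute diagonal entries $\alpha,\beta,\beta,\alpha$; choosing $U_{00},U_{11}$ to phase-align $C$ and $U_{01},U_{10}$ to phase-align $D$ replaces the off-diagonal entries by their trace norms $\gamma=\|C\|_1$ and $\delta=\|D\|_1$. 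Because $C$ and $D$ act on the disjoint key sectors $\{00,11\}$ and $\{01,10\}$, both alignments are realizable simultaneously, the remaining off-diagonal blocks stay zero, and the privacy-squeezed state is precisely $\rho_\mathrm{Bell}$.

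Next I would establish multiplicativity. Applying the same twisting copy-by-copy to $\rho^{\otimes m}$ factorizes every off-diagonal block of the tensor product into a product of single-copy trace norms, so $(\rho^{\otimes m})_\mathrm{ps}=(\rho_\mathrm{ps})^{\otimes m}=\rho_\mathrm{Bell}^{\otimes m}$, and privacy squeezing gives $K(\rho^{\otimes m})\ge K(\rho_\mathrm{Bell}^{\otimes m})$. Since the advantage-distillation Devetak--Winter post-processing \cite{Maurer93,DW05,ABBBMM06} of \cref{eq:ABBBMM06.22} is LOPC on the measured keys, privacy-squeezing invariance lets it achieve the same rate on $\rho^{\otimes m}$ as on $\rho_\mathrm{Bell}^{\otimes m}$, namely $K_\mathrm{AD-DW}^m(\rho_\mathrm{Bell})$. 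By the entropy rewriting of \cref{eq:ABBBMM06.22} (see also \cref{eq:advantage-equivalence} and \cref{eq:Bell-m}), this equals the right-hand side of \cref{eq:lemma-Belllowerbound}, which proves the corollary.

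The main obstacle is the reduction step: verifying that privacy squeezing of $\rho^{\otimes m}$ yields exactly $\rho_\mathrm{Bell}^{\otimes m}$ rather than a state with strictly smaller off-diagonal magnitude. This rests on two points: (i) the freedom to align both off-diagonal blocks at once, which follows from their acting on disjoint key sectors; and (ii) the factorization of the tensor-product off-diagonal blocks under a single copy-wise twisting. Once these are in place, the remaining work is only the entropy bookkeeping already carried out for \cref{eq:ABBBMM06.22,eq:advantage-equivalence}.
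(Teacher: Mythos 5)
Your proposal is correct and follows essentially the same route as the paper: reduce $\rho$ to the Bell-diagonal state with entries $(\alpha,\beta,\gamma,\delta)$ by the privacy-squeezing technique of~\cite{HHHO09} and then invoke the advantage-distillation Devetak--Winter bound of~\cite{ABBBMM06}, noting (as you do) that the comparison is licensed at the level of protocols acting only on the ccq data, which is exactly how the paper phrases its proof. Your variant of squeezing the $m$-fold tensor power directly, via copy-wise twisting and multiplicativity of the trace norm of the off-diagonal blocks, is precisely the alternative the paper itself sketches right after its proof (privacy squeezing commuting with advantage distillation, cf.\ $\rho_{\mathrm{Bell},m}$ and \cref{eq:advantage-equivalence}), so no genuinely new ideas are needed beyond what you supply.
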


\begin{proof}
Consider the classical-classical-quantum (ccq) state obtained by computing the purification (the state of the eavesdropper) of $\rho$, tracing the qudit part, and then measuring the qubit part in the standard basis.
By the privacy-squeezing technique of~\cite{HHHO09}, the ccq state of $\rho$ is the same as the ccq state of
\begin{equation}
\rho_\mathrm{ps}=
\begin{bmatrix}
\|A_1\|_1 & 0 & 0 & \|C\|_1 \\
0 & \| B_1 \|_1& \|D\|_1 &0 \\
0& \|C^\dagger\|_1 & \|B_2 \|_1& 0\\
\|C^{\dagger}\|_1 & 0 & 0& \|A_2\|_1
\end{bmatrix}.
\end{equation}
Therefore any protocol that distills key only from the ccq state will produce the same amount of key for both states.
The advantage distillation protocol~\cite{Maurer93} and the Devetak-Winter protocol~\cite{DW05} are among these protocols and, in particular, they are the protocols used in~\cite{ABBBMM06} to obtain the lower bound \cref{eq:lemma-Belllowerbound} on the key of $\rho_\mathrm{ps}$.
Therefore \cref{eq:lemma-Belllowerbound} holds for $\rho$.
\end{proof}

Alternatively, we would like to bring attention to the fact that advantage distillation can be performed before privacy squeezing, leading to the same result. 
Indeed, as shown in~\cite{HHHO05}, using advantage distillation directly on the qubits of the states in \cref{thm:ps-ad-dw} results in 
\begin{equation}
\rho^m = 
\begin{bmatrix}
A_1^{\otimes m}   & 0    & 0   & C^{\otimes m} \\
0     & B_1^{\otimes m}  & D^{\otimes m}   & 0 \\
0     &D^{\dag\otimes m}& B_2^{\otimes m} & 0 \\
C^{\dag\otimes m}& 0    & 0   & A_2^{\otimes m}
\end{bmatrix}.
\end{equation}
Privacy-squeezing $\rho^m$ then results in $\rho_{\mathrm{Bell},m}$ from \cref{eq:Bell-m}.
Then \cref{thm:ps-ad-dw} follows from \cref{eq:advantage-equivalence}.
In this sense, privacy squeezing commutes with advantage distillation for these states.

\section{Upper bounds on partial-transpose key}
In this section, we derive and discuss in detail the upper bound on the distillable key of $\rho^\Gamma$ for a class of ``block Bell diagonal'' states (\cref{thm:ps-ad-dw}) that are PPT.
We use this upper bound in the example shown in the main text.

\begin{theorem}
Let $\rho_{ABA'B'} $ be a PPT block Bell diagonal state of the form 
\begin{equation}
\rho_{ABA'B'} = 
\begin{bmatrix}
\alpha A_1 & 0 & 0 & C \\
 0 & \beta B_1 & 0 & 0 \\
 0 & 0 & \beta B_2 & 0\\
C^\dag & 0 & 0 & \alpha A_2
\end{bmatrix},
\end{equation}
with $A_1$, $A_2$, $B_1$, $B_2$ separable states and $2\alpha + 2\beta =1$.
Then 
\begin{equation}
K(\rho^\Gamma) < 2 \beta,
\label{eq:lemma-PPTupperbound}
\end{equation}
where $\rho^\Gamma = \mathrm{id}\otimes T(\rho)$ is the partial-transposed state.
\label{thm:blockPPTupperbound}
\end{theorem}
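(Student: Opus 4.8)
The plan is to upper-bound the distillable key of $\rho^\Gamma$ by its relative entropy of entanglement and then to observe that, after the partial transpose, essentially all of the entanglement is squeezed into a single two-dimensional key sector whose total weight is only $2\beta$.

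First I would compute $\rho^\Gamma$ explicitly in the key basis $\{|00\rangle,|01\rangle,|10\rangle,|11\rangle\}_{AB}$. Partial transposition on Bob transposes each shield block and flips Bob's key bit, so the off-diagonal block $C$, which in $\rho$ connects the sectors $|00\rangle$ and $|11\rangle$, migrates to connect $|01\rangle$ and $|10\rangle$, giving
\[
\rho^\Gamma=
\begin{bmatrix}
\alpha A_1^\Gamma & 0 & 0 & 0\\
0 & \beta B_1^\Gamma & C^\Gamma & 0\\
0 & (C^\Gamma)^\dagger & \beta B_2^\Gamma & 0\\
0 & 0 & 0 & \alpha A_2^\Gamma
\end{bmatrix},
\]
where $\Gamma$ on the blocks denotes the transpose on $B'$. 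This matrix is block diagonal across the three key sectors $\{00\}$, $\{01,10\}$, $\{11\}$; since the partial transpose of a separable state is separable, the two outer blocks $\alpha A_1^\Gamma$ and $\alpha A_2^\Gamma$ are separable, and all coherence now sits in the middle block $M$ of weight $2\beta$.

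I would then use that the distillable key is bounded by the relative entropy of entanglement, $K(\rho^\Gamma)\le E_\mathrm{r}(\rho^\Gamma)$, together with $E_\mathrm{r}(\rho^\Gamma)\le S(\rho^\Gamma\|\sigma)$ for any separable $\sigma$. The natural witness is the state $\sigma$ obtained from $\rho^\Gamma$ by deleting $C^\Gamma$ (dephasing the key); it is separable, being a mixture of product key projectors tensored with the separable shield states $A_i^\Gamma,B_i^\Gamma$. Because $\rho^\Gamma$ and $\sigma$ are block diagonal across the key sectors and coincide on the $\{00\}$ and $\{11\}$ blocks, the relative entropy localizes to the middle sector, $S(\rho^\Gamma\|\sigma)=2\beta\,S(\tilde M\|\tilde M_{\mathrm{diag}})$, where $\tilde M=M/2\beta$ is normalized and $\tilde M_{\mathrm{diag}}$ is its key-dephasing. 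Writing $S(\tilde M\|\tilde M_{\mathrm{diag}})=S(\tilde M_{\mathrm{diag}})-S(\tilde M)$ and using the pinching inequality $\tilde M\le 2\,\tilde M_{\mathrm{diag}}$ for a two-outcome dephasing yields $S(\tilde M\|\tilde M_{\mathrm{diag}})\le\log 2$. This is exactly the one-qubit locking bound of \cite{lock-ent}, and it gives $K(\rho^\Gamma)\le 2\beta$.

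The hard part is the strict inequality. Equality $S(\tilde M\|\tilde M_{\mathrm{diag}})=\log 2$ forces $\tilde M$ to be a maximally entangled pure vector on the two key sectors, i.e. the middle block $M$ to be rank one with balanced weights. I expect this to be incompatible with positivity of $\rho$ itself: the same $C$ that must be ``maximal'' for $M$ to be pure is simultaneously constrained by $\bigl[\begin{smallmatrix}\alpha A_1 & C\\ C^\dagger & \alpha A_2\end{smallmatrix}\bigr]\ge 0$, and tracking both conditions shows that in the boundary case the surplus weight is pushed into the separable sectors, so that $\rho^\Gamma$ becomes separable and $K(\rho^\Gamma)=0$. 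Turning this dichotomy into a uniform statement — that the two positivity constraints together keep $S(\tilde M\|\tilde M_{\mathrm{diag}})$ strictly below $\log 2$ for every admissible choice of $(A_i,B_i,C)$ with $C\neq 0$ — is the step I expect to demand the most care.
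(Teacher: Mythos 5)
Your bound $K(\rho^\Gamma)\le 2\beta$ is correct and follows essentially the same route as the paper: both arguments localize all coherence of $\rho^\Gamma$ in the middle key sector of weight $2\beta$ and bound its relative entropy of entanglement by one bit via key dephasing. The only real difference is implementation: the paper decomposes $\rho^\Gamma = 2\alpha\,\rho'_\mathrm{corr} + 2\beta\,\rho'_\mathrm{acorr}$, uses convexity of $E_\mathrm{r}$, and gets $E_\mathrm{r}(\rho'_\mathrm{acorr})\le 1$ by citing non-lockability of the relative entropy \cite{lock-ent,HHHO09}, whereas your explicit separable witness $\sigma$ (the key-dephased state) together with the pinching estimate $S(\rho^\Gamma\Vert\sigma)=2\beta\,S\bigl(\tilde M\,\Vert\,\tilde M_{\mathrm{diag}}\bigr)\le 2\beta$ is exactly the unpacked, self-contained content of that citation.

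Regarding the strict inequality, which you rightly flag as the hard part: you can stop worrying about it, because the paper's own proof does not establish it either. Its chain $K(\rho^\Gamma)\le E_\mathrm{r}(\rho^\Gamma)\le 2\beta\,E_\mathrm{r}(\rho'_\mathrm{acorr})\le 2\beta$ is non-strict; the ``$<$'' in the theorem statement is unsupported as written, and only the non-strict bound is used in the applications (e.g.\ $K(\rho_d^\Gamma)\le 1/(\sqrt{d}+1)$ in the main text). Your sketched dichotomy for forcing strictness (a rank-one balanced middle block being incompatible with positivity of $\rho$, with the boundary case degenerating to a separable $\rho^\Gamma$) is a plausible direction, but it is not needed to match the paper, and indeed its uniform version is not carried out there either.
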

\begin{proof}
We first observe that $\rho^{\Gamma}$ is a state, since $\rho \in PPT$. 
Next, $\rho^{\Gamma}$ can be expressed as  a convex combination of two states 
\[\rho^\Gamma = 2\alpha \rho'_\mathrm{corr} + 2 \beta \rho'_\mathrm{acorr}\] 
where 
\[\rho'_\mathrm{corr} = \frac{1}{2}\projector{00}\otimes A_1^{\Gamma} + \frac{1}{2}\projector{11}\otimes A_2^{\Gamma}\]
is a separable state, and 
\[\rho'_\mathrm{acorr} = \frac{1}{2}
\begin{bmatrix}
 0& 0 & 0 & 0 \\
 0 & B_1^{\Gamma} & C^\Gamma & 0 \\
 0 & C^{\dag\Gamma} & B_2^{\Gamma} & 0\\
 0 & 0 & 0 & 0
\end{bmatrix}
\]
is a state that becomes separable after dephasing either of the qubits.
In particular, the relative entropy of entanglement for these states reads
\[E_\mathrm{r}(\rho'_\mathrm{corr}) =0,\]
because the state is separable, and 
\[E_\mathrm{r}(\rho'_\mathrm{acorr}) \leq 1,\]
because the relative entropy is non-lockable~\cite{lock-ent} and dephasing a qubit can be done by applying a unitary, picked at random between the identity and the Pauli Z.
The non-lockability property then assures that $E_\mathrm{r}$ does not drop down under a von-Neumann measurement by more than the entropy of the random variable that samples the unitary transformation~\cite[Theorem~3]{HHHO09}.
Since the relative entropy after the random unitary is zero, it could not have been more than $h(\frac{1}{ 2})=1$ before it.
We thus have that 
\[K(\rho^{\Gamma})\leq E_\mathrm{r}(\rho^{\Gamma})\leq 2\beta E_\mathrm{r}(\rho_\mathrm{acorr}')\leq 2\beta,\]
where we used the facts that $K\leq E_\mathrm{r}$~\cite{HHHO05,HHHO09} and that $E_\mathrm{r}$ is convex.
\end{proof}

\section{Gap example}

\begin{figure}
	\includegraphics[width=8cm,trim={0cm 0cm 0cm 0cm}]{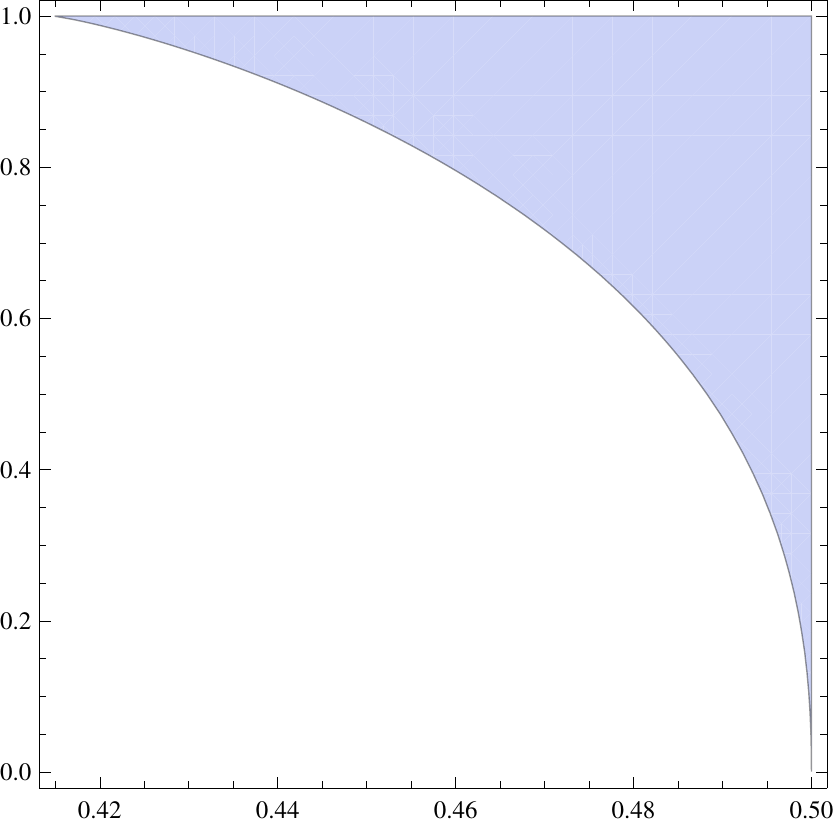}
	\caption{The shaded region is the set of pairs $(a,\alpha) \in [0,1]\times [0.415,0.5]$ leading to the gap between the device-independent key $K^\text{DI}$ and the device-dependent one $K$, for states of the form of \Cref{eq:ppt-gap-states}, according to the parametrization $\gamma =\alpha a $ and $\beta = \frac{1}{2} - \alpha$.
} \label {fig:region}
\end{figure}

Recall that from the main text it follows that a sufficient condition for a PPT state
to exhibit a gap $K^\mathrm{DI}(\rho) < K(\rho)$ is that $K(\rho^{\Gamma}) <K(\rho)$. 
We show below a sufficient condition for a wide class of states. 

\begin{theorem}
Let $\rho_{ABA'B'} $ be a PPT block Bell diagonal state of the form 
\begin{equation}
\rho_{ABA'B'} = 
\begin{bmatrix}
\alpha A_1 & 0 & 0 & C \\
 0 & \beta B_1 & 0 & 0 \\
 0 & 0 & \beta B_2 & 0\\
C^\dag & 0 & 0 & \alpha A_2
\end{bmatrix},
\label{eq:ppt-gap-states}
\end{equation}
with $A_1$, $A_2$, $B_1$, $B_2$ separable states and $2\alpha + 2\beta =1$.
Let $\gamma=\|C\|$, then the condition 
\begin{equation}
H(\alpha-\gamma,\alpha+\gamma,\beta,\beta) < 2 \alpha
\label{eq:separation}
\end{equation}
implies a gap $K^\text{DI}(\rho) < K_{D}(\rho)$.
\end{theorem}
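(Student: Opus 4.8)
The plan is to read off an upper bound on $K^\text{DI}(\rho)$ from the partial-transpose bound of \cref{thm:blockPPTupperbound} and a lower bound on the distillable key from \cref{thm:ps-ad-dw}, and then to observe that the hypothesis \cref{eq:separation} is precisely the statement that these two bounds cross. First I would note that \cref{eq:ppt-gap-states} is the block Bell diagonal family treated in \cref{thm:ps-ad-dw,thm:blockPPTupperbound} specialised to a vanishing off-diagonal shield block, so that $\delta=\|D\|_1=0$. Since $A_1,A_2,B_1,B_2$ are states, the diagonal blocks have trace norms $\|\alpha A_i\|_1=\alpha$ and $\|\beta B_i\|_1=\beta$, consistent with the normalization $2\alpha+2\beta=1$, and $\gamma=\|C\|_1$ is exactly the parameter appearing in \cref{thm:ps-ad-dw}.

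Next I would combine the two earlier results. On the one hand, \cref{thm:blockPPTupperbound} gives $K(\rho^\Gamma)<2\beta$, which together with the device-independent PPT inequality \cref{eq:pptupperbound}, namely $K^\text{DI}(\rho)\le K(\rho^\Gamma)$, yields
\begin{equation*}
K^\text{DI}(\rho) < 2\beta.
\end{equation*}
On the other hand, evaluating \cref{thm:ps-ad-dw} at $m=1$ with $\delta=0$, and using $2\alpha+2\beta=1$ to clear the denominator, produces the achievable Devetak--Winter rate
\begin{equation*}
K_D(\rho) = 1 - H(\alpha+\gamma,\alpha-\gamma,\beta,\beta).
\end{equation*}

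Finally I would close the argument using the hypothesis. Rewriting $2\beta=1-2\alpha$ via the normalization and invoking the permutation invariance of the Shannon entropy, the condition \cref{eq:separation} reads $H(\alpha+\gamma,\alpha-\gamma,\beta,\beta)<2\alpha$, equivalently $1-H(\alpha+\gamma,\alpha-\gamma,\beta,\beta)>1-2\alpha=2\beta$, that is $K_D(\rho)>2\beta$. Chaining the displays then gives $K^\text{DI}(\rho)<2\beta<K_D(\rho)$, the asserted gap. I expect no genuine obstacle here: the statement is a direct composition of \cref{thm:blockPPTupperbound}, \cref{thm:ps-ad-dw} and \cref{eq:pptupperbound}, and the condition \cref{eq:separation} has been tailored so that the two bounds separate. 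The only points needing care are bookkeeping ones, namely matching the normalization conventions ($2\alpha+2\beta=1$ and the identification $\delta=0$) and using that $H$ is symmetric under permutation of its arguments so that the ordering in \cref{eq:separation} agrees with the one produced by \cref{thm:ps-ad-dw}.
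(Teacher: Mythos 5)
Your proposal is correct and takes essentially the same route as the paper: the paper's proof likewise combines \cref{thm:ps-ad-dw} at $m=1$ (with the off-diagonal shield block absent, i.e.\ $\delta=0$), the partial-transpose bound \cref{thm:blockPPTupperbound}, and the main-text inequality \cref{eq:pptupperbound}, noting that \cref{eq:separation} is equivalent to $1-H(\alpha-\gamma,\alpha+\gamma,\beta,\beta)>2\beta$ via $2\alpha+2\beta=1$. One cosmetic point: since \cref{thm:ps-ad-dw} provides a lower bound, your display should read $K_D(\rho)\geq 1-H(\alpha+\gamma,\alpha-\gamma,\beta,\beta)$ rather than an equality, which changes nothing in the chain $K^\text{DI}(\rho)<2\beta<K_D(\rho)$.
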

\begin{proof}
The result directly follows from \cref{thm:ps-ad-dw} (for $m$= 1) and \cref{thm:blockPPTupperbound} and from what is said above.
The condition $1-H(\alpha-\gamma,\alpha+\gamma,\beta,\beta) > 2\beta$ is equivalent to \Cref{eq:separation} via $2\alpha +2\beta=1$. 
\end{proof}

The sufficient condition given in \Cref{eq:separation} can be further
expressed with $2$ parameters only, utilizing the normalization condition $2\alpha+2\beta=1$.
We therefore express $\gamma = \alpha a$ with $\alpha \in [0,1]$ and $\beta = (1 - 2\alpha)/2$, obtaining the equivalent condition as a function of $\alpha$ and $a$:
\begin{equation}
H\left((1+a)\alpha,(1-a)\alpha, \frac{1}{2} -\alpha,\frac{1}{2}-\alpha\right) < 2\alpha.
\end{equation}
The allowed region of parameters $(a,\alpha)$ that satisfy the above condition
is presented in \Cref{fig:region}.

\section{The State Distillable Key}

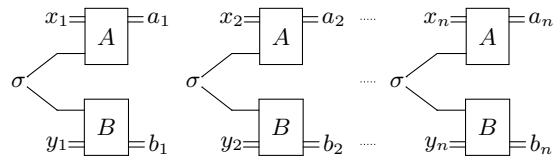
\begin{figure}
    \begin{equation*}
    \Qcircuit @C=.6em @R=.5em{
        &\push{x_1}&\puremultigate{1}{{A}}\cw &\push{\,a_1}\cw&
        &\push{x_2}&\puremultigate{1}{{A}}\cw &\push{\,a_2}\cw&
        &\dw&
        &\push{x_n}&\puremultigate{1}{{A}}\cw &\push{\,a_n}\cw&
        \\
        && \ghost{{A}} &&
        && \ghost{{A}} &&
        &&
        && \ghost{{A}} &&
        \\
        \push{\sigma\,} \ar@{-} [-1,1]\ar@{-} [1,1]&&&&
        \push{\sigma\,} \ar@{-} [-1,1]\ar@{-} [1,1]&&&&
        &\dw&
        \push{\sigma\,} \ar@{-} [-1,1]\ar@{-} [1,1]
        \\
        &&\multigate{1}{{B}} &&
        &&\multigate{1}{{B}} &&
        &&
        &&\multigate{1}{{B}} &&
        \\
        &\push{y_1}& \pureghost{{B}}\cw &\push{\,b_1}\cw&
        &\push{y_2}& \pureghost{{B}}\cw &\push{\,b_2}\cw&
        &\dw&
        &\push{y_n}& \pureghost{{B}}\cw &\push{\,b_n}\cw&
        }
    \end{equation*}
    \caption{\label{fig:state-devices}%
    The strongest restriction on the devices of the adversary in the case of state-based device-independent QKD. 
    This device is a special case of the most general device shown in \cref{fig:EAC}, which covers both state- and channels-based settings.
    Each round is an i.i.d.\ copy of the same measurements $\mathcal{N} = \{A^x_a, B^y_b\}$ on the same state $\sigma$.
    A cLOPC protocol around this device as in \cref{fig:DI-protocol} is a particular qLOPC protocol on $\sigma^{\otimes n}$.}
\end{figure}

The set of bipartite local measurements $\mathcal{M}$ of a device 
is a collection of POVMs for Alice and POVMs for Bob.
Together with a quantum state $\rho$, the result of measuring such POVMs will be a conditional probability distribution with two inputs and two outputs, creating equivalence classes of devices that generate the same distribution.
Recall that we denote this condition as $(\mathcal{M},\rho) \equiv (\mathcal{N},\sigma)$.
Additionally, an $\eps$ distance between conditional distributions will induce a $\eps$ distance between devices which we can denote with $(\mathcal{M},\rho) \approx_{\epsilon} (\mathcal{N},\sigma)$.
It is enough, for example, to consider the distance 
\[d(p,p') = \sup_{x,y}\|p(\cdot|x,y) - p'(\cdot|x,y)\|_1\leq \eps.\]

Informally, the device-independent distillable key of a state %
it is a supremum over all possible measurements $\mathcal{M}$ over the finite key rates $\kappa$ achieved by the best protocol on any device compatible with $(\mathcal{M},\rho)$, all in an appropriate asymptotic limit of blocklength and security/distance parameter.
This process is sufficiently general to include the recently proposed protocols of~\cite{ArnonFriedman2018} and realistic future protocols.
However,  for our purpose of upper bounding this quantity, considering all the compatible devices at every blocklength is exceedingly cumbersome.
To simplify the treatment, as mentioned in the main text, we define, as a relaxation, the larger device-independent distillable key $K^\text{DI}$, where we limit to compatible i.i.d.\ devices.
These are shown in \cref{fig:state-devices}.

Our definition follows the style of~\cite{karol-PhD,HHHO05,BCHW15}.
The key of a device is defined as:
\begin{align}
\label{eq:KDI}
K^\text{DI}(\mathcal{M},\rho) 
&\coloneqq \inf_{\eps >0} \limsup_{n\to\infty}  \kappa^{DI,\eps}_n(\mathcal{M},\rho)
\intertext{where}
\label{eq:KDI-eve}
\kappa^{DI,\eps}_{n}(\mathcal{M},\rho) 
&\coloneqq
\sup_{\substack{\Pi}} \inf _{\substack{(\mathcal{N},\sigma)\approx_\eps (\mathcal{M},\rho)}}
\kappa_n^\eps \qty( \Pi((\mathcal{N},\sigma)^{\otimes n}))
\end{align}
is the maximal key rate achieved for any security parameter $\eps$, 
blocklength or number of copies $n$, and measurement $\mathcal{M}$ 
chosen by Alice and Bob.

\begin{figure}
    \begin{equation*}
    \phantom{\sigma}
        \Qcircuit @C=1em @R=.75em{
            &&\qw                 &\qw&\multigate{1}{{A}^{\otimes n}}
            \\
            &&\puremultigate{2}{cLOPC}&\ustick{x^n}\cw&\pureghost{{A}^{\otimes n}}\cw&\ustick{a^n}\cw&\puremultigate{2}{cLOPC}\cw&\ustick{k}\cw
            \\
            \lstick{\sigma^{\otimes n}} \ar@{-} [-2,1] \ar@{-} [2,1]
            &&\pureghost{cLOPC}&      &                &             &\pureghost{LOPC}
            \\
            &&\pureghost{cLOPC}       &\ustick{y^n}\cw&\pureghost{{B}^{\otimes n}}\cw&\ustick{b^n}\cw&\pureghost{cLOPC}\cw&\ustick{\tilde{k}}\cw
            \\
            &&\qw                 &\qw&\multigate{-1}{{B}^{\otimes n}}
        }
    \end{equation*}
    \caption{%
    A device-independent protocol will use cLOPC to generate the classical inputs $x^n$ and $y^n$ to $n$ copies of the device $\qty(\mathcal{N}=\{A^x_a,B^y_b\},\sigma)$, and process the outputs $a^n$ and $b^n$.
    The composition of the measurements and the cLOPC protocols is a qLOPC protocol acting on the state $\sigma$ as a device-dependent protocol.}
    \label{fig:DI-protocol}
\end{figure}
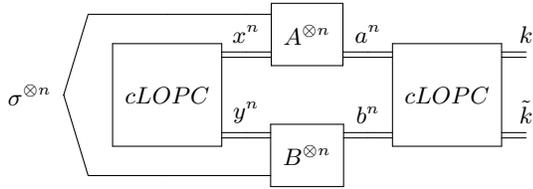

$\Pi$ is a protocol composed of classical local operations and public communication (cLOPC) acting on $n$ identical copies of the device $(\mathcal{N},\sigma)$, which composed with the measurement, results in a protocol of quantum local operations and public communication (qLOPC), as displayed in \cref{fig:DI-protocol}.
$\kappa_n^\eps(\Pi,(\mathcal{N},\sigma))$ is the amount of $\eps$-perfect key rate achieved at the output;
since our result does not depend on its expression which also varies with the security criteria, for an explicit definition of $\kappa_n^\eps$ we refer to~\cite{karol-PhD,HHHO05,BCHW15,ArnonFriedman2018} and references therein.
The key of a state is then the maximum over the choice of measurements
\begin{equation}
    K^\text{DI}(\rho) \coloneqq \sup_{\mathcal{M}} K^\text{DI}(\mathcal{M},\rho).
\end{equation}

$K^\text{DI}(\rho)$ is the largest operationally justified definition of the device-independent key.
This definition assumes that Alice and Bob have determined the best measurement for the state $\rho$ and allows the attacker to optimize over all the classical information.
At the same time, $K^\text{DI}(\rho)$ restricts the attacker to i.i.d.~device attacks, corresponding to the so-called collective attack in the case of device-dependent quantum key distribution. 
Although there is currently no analog of the quantum de Finetti theorem for device-independent QKD~\cite{Rotem-phd,Christandl2007one,renner2005security}, it is possible that general attacks and i.i.d. attacks are also equally powerful in the device-independent scenario. \cite{ArnonFriedman2018} shows that this is the case for the CHSH game.

To obtain our bounds, we relax into non-operational rates. 
We further weaken the adversary by taking away his knowledge about the public information, which is possible because the cLOPC protocol with the measurement forms a particular qLOPC protocol.
Namely, we have that the finite rate of \cref{eq:KDI-eve} satisfies for all devices  $(\mathcal{N},\sigma)$ and cLOPC protocols $\Pi$
\begin{equation}
\label{eq:cLOPC-qLOPC}
\kappa_n^\eps \qty( \Pi((\mathcal{N},\sigma)^{\otimes n}))
\leq \sup_{\Pi'\in \text{qLOPC}} \kappa_n^\eps \qty( \Pi'(\sigma^{\otimes n}))
\end{equation}
where $\Pi'$ in the right-hand side are now qLOPC protocols independent of the measurement $\mathcal{N}$.
Notice that 
\begin{equation}
    \label{eq:KD-boxed}
    \kappa_{n}^\eps(\sigma) \coloneqq \sup_{\Pi'\in \text{qLOPC}} \kappa_n^\eps \qty( \Pi'(\sigma^{\otimes n}))
\end{equation}
is the best device-dependent finite rate which asymptotically leads to the device-dependent distillable key
\begin{align}
\label{eq:KD}
K(\rho) 
&\coloneqq \inf_{\eps >0} \limsup_{n\to\infty} \kappa_{n}^\eps(\rho)
\end{align}
in the style of~\cite{karol-PhD,HHHO05,BCHW15}.
By taking infimum over devices $(\mathcal{N},\sigma)$ on both sides of \cref{eq:cLOPC-qLOPC} we obtain
\begin{align*}
\kappa^{DI,\eps}_{n}(\mathcal{M},\rho) 
& \leq
\inf_{(\mathcal{N},\sigma)\approx_\eps (\mathcal{M},\rho)} \kappa_{n}^\eps(\sigma).
\end{align*}
Taking further the infimum over $\epsilon$ and limit of large $n$ in, we obtain
\begin{align}
K^\text{DI}(\mathcal{M},\rho)
& \leq
\inf_{\eps >0} \limsup_{n\to\infty} 
\inf_{(\mathcal{N},\sigma)\approx_\eps (\mathcal{M},\rho)}
\kappa_n^\eps(\sigma).
\label{eq:KDvsKDI}
\end{align}

We summarize the proof of the above inequality in \cref{eq:KDI-start} below.
The main result which follows from the above inequality (see \cref{thm:main_appendix} below) reads
\begin{equation}
\label{thm:main}
K^\text{DI}(\rho)\leq 
{K}^{\downarrow} (\rho)\coloneqq 
\sup_\mathcal{M} \inf_{(\mathcal{N},\sigma)\equiv (\mathcal{M},\rho)} K(\sigma).
\end{equation}
Before we proceed with the proof of the above
statement we prove \cref{eq:KDvsKDI}. 

\begin{lemma}
\label{eq:KDI-start}
\begin{align}
K^\text{DI}(\mathcal{M},\rho)
& \leq
\inf_{\eps >0} \limsup_{n\to\infty} 
\inf_{(\mathcal{N},\sigma)\approx_\eps (\mathcal{M},\rho)}
\kappa_n^\eps(\sigma)
\end{align}
where $\kappa_n^\eps(\sigma)$ is the finite parameter device-dependent rate of \cref{eq:KD-boxed}.
\end{lemma}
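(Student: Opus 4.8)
The plan is to reduce the lemma entirely to the pointwise inequality \cref{eq:cLOPC-qLOPC} already established above, which states that for every device $(\mathcal{N},\sigma)$ and every cLOPC protocol $\Pi$ one has $\kappa_n^\eps(\Pi((\mathcal{N},\sigma)^{\otimes n})) \leq \kappa_n^\eps(\sigma)$. Everything remaining is bookkeeping: I would push this single inequality through the nested $\sup_\Pi$, $\inf_{(\mathcal{N},\sigma)}$, $\limsup_n$ and $\inf_\eps$ that assemble $K^\text{DI}(\mathcal{M},\rho)$ via the definitions \cref{eq:KDI-eve} and \cref{eq:KDI}.

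First I would fix $\eps$, $n$ and a cLOPC protocol $\Pi$, and take the infimum of both sides of \cref{eq:cLOPC-qLOPC} over all compatible devices $(\mathcal{N},\sigma)\approx_\eps(\mathcal{M},\rho)$. Since the inequality holds termwise, monotonicity of the infimum yields
\begin{equation*}
\inf_{(\mathcal{N},\sigma)\approx_\eps(\mathcal{M},\rho)} \kappa_n^\eps(\Pi((\mathcal{N},\sigma)^{\otimes n})) \leq \inf_{(\mathcal{N},\sigma)\approx_\eps(\mathcal{M},\rho)} \kappa_n^\eps(\sigma).
\end{equation*}
The crucial observation is that the right-hand side no longer depends on $\Pi$. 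Taking the supremum over $\Pi$ on the left therefore leaves the right-hand side untouched (a supremum of quantities each dominated by the same bound is still dominated by that bound), giving $\kappa^{DI,\eps}_n(\mathcal{M},\rho) \leq \inf_{(\mathcal{N},\sigma)\approx_\eps(\mathcal{M},\rho)} \kappa_n^\eps(\sigma)$, where the left-hand side is exactly \cref{eq:KDI-eve}.

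Finally I would apply $\limsup_{n\to\infty}$ and then $\inf_{\eps>0}$ to both sides, using that both are monotone with respect to the termwise ordering in $n$ and the pointwise ordering in $\eps$. By the definition \cref{eq:KDI} the left-hand side becomes $K^\text{DI}(\mathcal{M},\rho)$, while the right-hand side becomes the asserted bound, closing the argument.

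There is no deep analytic obstacle here; the proof is pure quantifier manipulation and the physical content was already absorbed into \cref{eq:cLOPC-qLOPC}. The one step demanding care is discharging the $\sup_\Pi$: this succeeds precisely because the bound $\kappa_n^\eps(\sigma)$ is $\Pi$-independent, reflecting the fact that composing any fixed cLOPC protocol with the i.i.d.\ measurements produces a particular qLOPC protocol on $\sigma^{\otimes n}$, which is in turn dominated by the device-dependent optimum in \cref{eq:KD-boxed}. I would also note in passing that each $\approx_\eps$ neighbourhood is nonempty, since $(\mathcal{M},\rho)$ is trivially compatible with itself, so every infimum above is taken over a legitimate set.
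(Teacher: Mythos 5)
Your proof is correct and follows essentially the same route as the paper's: the paper proves the lemma by the max--min inequality ($\sup_\Pi\inf_{(\mathcal{N},\sigma)} \leq \inf_{(\mathcal{N},\sigma)}\sup_\Pi$) followed by relaxing the composition of cLOPC with the measurements to a general qLOPC protocol, which is exactly the content of \cref{eq:cLOPC-qLOPC} that you push through the quantifiers. Your observation that the bound $\kappa_n^\eps(\sigma)$ is $\Pi$-independent so the supremum over $\Pi$ can be discharged directly is just the max--min inequality unrolled, so the two arguments coincide.
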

\begin{proof}
By simple max-min inequality we can swap the order of the optimization to get an upper bound,
and by then relaxing to all device-dependent protocols we have 
{\allowdisplaybreaks
\begin{align}
\nonumber
\kappa^{DI,\eps}_{n}&(\mathcal{M},\rho) 
\\
&\coloneqq
\sup_{\substack{\Pi\in\\\text{cLOPC}}} \inf _{(\mathcal{N},\sigma)\approx_\eps (\mathcal{M},\rho)}
\kappa_n^\eps\qty(\Pi((\mathcal{N},\sigma)^{\otimes n}))
\\
\nonumber
&\leq \inf _{(\mathcal{N},\sigma) \approx_\eps (\mathcal{M},\rho)}
\sup_{\substack{\Pi\in\\\text{cLOPC}}} \kappa_n^\eps\qty(\Pi((\mathcal{N},\sigma)^{\otimes n}))
\\
\label{eq:lopc-locc}
&\leq \inf _{(\mathcal{N},\sigma) \approx_\eps (\mathcal{M},\rho)}
\sup_{\substack{\Pi'\in\\\text{qLOPC}}} \kappa_n^\eps\qty(\Pi'(\sigma^{\otimes n}))
\\
\label{eq:kappa-dd}
&= \inf _{(\mathcal{N},\sigma) \approx_\eps (\mathcal{M},\rho)} 
\kappa_n^\eps(\sigma)
\end{align}}
where the rates in \Cref{eq:lopc-locc,eq:kappa-dd}
are the same rates introduced in \Cref{eq:KD,eq:KD-boxed}.
The relaxation of the protocols in \Cref{eq:lopc-locc} is clear and displayed in \Cref{fig:DI-protocol};
the measurement $\mathcal{N}$ acts like a fixed pre-processing on the state,
and thus for any protocol $\Pi$ acting on the devices,
the composition of $\Pi$ with the $n$ measurements $\mathcal{N}$
is just a particular protocol acting on $n$-copies of the state $\sigma$.
Removing this constraint can only increase the rate.
Plugging in the definition at \cref{eq:KDI}, we thus have
\begin{align}
K^\text{DI}(\mathcal{M},\rho)
& \leq
\inf_{\eps >0} \limsup_{n\to\infty} 
\inf_{(\mathcal{N},\sigma)\approx_\eps (\mathcal{M},\rho)}
\kappa_n^\eps(\sigma)
\end{align}
and the proof is concluded.
\end{proof}

Before proving the main result of this section
we need a technical observation.

\begin{observation}
    \label{lemma:maxmin}
    The max-min inequality is also valid as a limsup-inf inequality.
    Namely, for any sequence of functions $f_n(x)$, we have 
    \begin{align}
        \label{eq:limsup-inf}
        \limsup_{n\to\infty} \inf_x f_n(x)
        \leq \inf_x \limsup_{n\to\infty} f_n(x)
    \end{align}    
    Indeed, we can rewrite the limit superior using infimum and supremum,
    and then use max-min inequality followed by the commutation of two infima:
    \begin{align*}
        \limsup_{n\to\infty} \inf_x f_n(x)
        &=    \inf_{ n\geq 0} \sup_{m\geq n} \inf_x f_n(x)
        \\
        &\leq \inf_{ n\geq 0} \inf_x \sup_{m\geq n} f_n(x)
        \\
        &= \inf_x \limsup_{n\to\infty} f_n(x).
    \end{align*}

\end{observation}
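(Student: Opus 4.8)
The plan is to reduce this \emph{limsup--inf} inequality to the ordinary (pointwise) max--min inequality by expanding the limit superior into nested infima and suprema. The workhorse is the elementary identity
\[
\limsup_{n\to\infty} g_n = \inf_{N\geq 0}\ \sup_{m\geq N} g_m,
\]
valid for any sequence $g_n$ in the extended reals. Once the $\limsup$ on the left is written this way, the inequality becomes a statement purely about infima and suprema, to which the standard max--min inequality applies directly.

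First I would apply the identity to the sequence $g_n \coloneqq \inf_x f_n(x)$, rewriting the left-hand side as the doubly nested quantity $\inf_{N\geq 0}\sup_{m\geq N}\inf_x f_m(x)$. Then, for each fixed $N$, I would invoke the finite max--min inequality on the inner two operations: since for every fixed $x_0$ one has $\inf_x f_m(x)\leq f_m(x_0)$, hence $\sup_{m\geq N}\inf_x f_m(x)\leq \sup_{m\geq N} f_m(x_0)$, taking the infimum over $x_0$ yields $\sup_{m\geq N}\inf_x f_m(x)\leq \inf_x\sup_{m\geq N} f_m(x)$. Applying $\inf_{N\geq 0}$ to both sides preserves the inequality.

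The final step is to commute the two infima on the right. Because infima over independent index sets always commute, $\inf_{N\geq 0}\inf_x \sup_{m\geq N} f_m(x) = \inf_x\inf_{N\geq 0}\sup_{m\geq N} f_m(x)$, and by the same representation of $\limsup$ the inner expression equals $\inf_x\limsup_{n\to\infty} f_n(x)$, which closes the argument.

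I do not anticipate a genuine obstacle, as the statement is elementary; the value lies in its reuse to legitimize swapping $\limsup$ past $\inf$ in the preceding derivation. The only points demanding care are bookkeeping and generality. On bookkeeping, the index bound by the supremum must be the \emph{running} sequence index (here $m$), not the outer $\limsup$ index, so that the max--min step is applied to $\sup_{m\geq N}$ rather than misapplied to $\inf_{N}$. On generality, one should confirm that both the sup/inf representation of $\limsup$ and the max--min inequality remain valid when the $f_n$ take values in the extended reals $[-\infty,+\infty]$ --- which they do --- so the argument still applies even if the rates being bounded are unbounded or fail to converge.
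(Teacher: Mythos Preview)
Your proposal is correct and follows exactly the same three-step approach as the paper: rewrite the $\limsup$ as $\inf_{N}\sup_{m\geq N}$, apply the max--min inequality to swap $\sup_{m\geq N}$ and $\inf_x$, then commute the two infima. Your version is in fact slightly more careful with indexing (writing $f_m$ rather than $f_n$ inside the supremum over $m$) and makes the extended-real caveat explicit, but the argument is the same.
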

We are ready to prove the main result.

\begin{theorem} For any bipartite state $\rho$ it holds
\begin{equation}
K^\text{DI}(\rho)\leq 
{K}^{\downarrow} (\rho)\coloneqq 
\sup_\mathcal{M} \inf_{(\mathcal{N},\sigma)\equiv (\mathcal{M},\rho)} K(\sigma).
\end{equation}
\label{thm:main_appendix}
\end{theorem}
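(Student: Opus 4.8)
The plan is to start from the inequality already secured in \cref{eq:KDI-start}, namely
\[
K^\text{DI}(\mathcal{M},\rho) \leq \inf_{\eps >0} \limsup_{n\to\infty} \inf_{(\mathcal{N},\sigma)\approx_\eps (\mathcal{M},\rho)} \kappa_n^\eps(\sigma),
\]
and to massage the right-hand side until it becomes $\inf_{(\mathcal{N},\sigma)\equiv (\mathcal{M},\rho)} K(\sigma)$, after which a single application of $\sup_\mathcal{M}$ together with the definition $K^\text{DI}(\rho)=\sup_\mathcal{M} K^\text{DI}(\mathcal{M},\rho)$ closes the argument. The obstacle is entirely one of commuting limits and decoupling a shared parameter: in $K(\sigma)=\inf_{\eps>0}\limsup_n \kappa_n^\eps(\sigma)$ the device is fixed and the optimization over $\eps$ sits \emph{outside} the $\limsup$, whereas in the bound above the infimum over compatible devices sits \emph{inside} the $\limsup_n$ and, worse, is coupled to the very same $\eps$ that governs the rate $\kappa_n^\eps$. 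I must pull that inner infimum out and split the two roles of $\eps$.

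First I would invoke \cref{lemma:maxmin} with the fixed domain $\{(\mathcal{N},\sigma):(\mathcal{N},\sigma)\approx_\eps (\mathcal{M},\rho)\}$ and $f_n((\mathcal{N},\sigma))=\kappa_n^\eps(\sigma)$, for each fixed $\eps>0$. The limsup--inf inequality pulls the device-infimum out of the $\limsup_n$ at the cost of an upper bound, which is exactly the direction I want:
\[
\limsup_{n\to\infty} \inf_{(\mathcal{N},\sigma)\approx_\eps (\mathcal{M},\rho)} \kappa_n^\eps(\sigma) \leq \inf_{(\mathcal{N},\sigma)\approx_\eps (\mathcal{M},\rho)} \limsup_{n\to\infty} \kappa_n^\eps(\sigma).
\]
Applying $\inf_{\eps>0}$ to both sides preserves the inequality. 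Next I would exploit that the exact-equivalence relation $\equiv$ is the special case $\approx_0$, so $\{(\mathcal{N},\sigma)\equiv(\mathcal{M},\rho)\}\subseteq\{(\mathcal{N},\sigma)\approx_\eps(\mathcal{M},\rho)\}$ for \emph{every} $\eps>0$; shrinking a feasible set only raises an infimum, so I may replace $\approx_\eps$ by $\equiv$ inside, again at the price of an upper bound. The resulting device-infimum now runs over a set independent of $\eps$, the two infima therefore commute, and the fixed-device expression reassembles into $K(\sigma)$:
\[
\inf_{\eps>0}\inf_{(\mathcal{N},\sigma)\equiv (\mathcal{M},\rho)} \limsup_{n\to\infty} \kappa_n^\eps(\sigma) = \inf_{(\mathcal{N},\sigma)\equiv (\mathcal{M},\rho)} K(\sigma).
\]
Chaining these three estimates gives $K^\text{DI}(\mathcal{M},\rho)\leq \inf_{(\mathcal{N},\sigma)\equiv (\mathcal{M},\rho)} K(\sigma)$, and taking $\sup_\mathcal{M}$ yields $K^\text{DI}(\rho)\leq K^{\downarrow}(\rho)$.

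I expect the genuinely delicate point to be the very first step: the only relation available between $\limsup_n\inf$ and $\inf\limsup_n$ is the one-sided bound of \cref{lemma:maxmin}, and it is a fortunate alignment that it points the correct way for an upper bound, so the whole argument survives. The secondary subtlety worth stating carefully is that the containment $\{\equiv\}\subseteq\{\approx_\eps\}$ holds for all $\eps>0$ simultaneously, which is precisely what lets the outer $\inf_{\eps>0}$ be absorbed into the definition of $K(\sigma)$ rather than being lost along the way. Everything else is bookkeeping of monotone operations.
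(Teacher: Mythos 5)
Your proposal is correct and follows essentially the same route as the paper's own proof: starting from \cref{eq:KDI-start}, applying the limsup--inf inequality of \cref{lemma:maxmin} at fixed $\eps$, restricting the infimum from $\approx_\eps$-compatible devices to $\equiv$-compatible ones (a subset, hence an upper bound), commuting the now-decoupled infima to reassemble $K(\sigma)$, and finally taking $\sup_\mathcal{M}$. The two subtleties you flag --- the fortunate direction of the one-sided inequality and the containment $\{\equiv\}\subseteq\{\approx_\eps\}$ holding uniformly in $\eps$ --- are exactly the points the paper's argument relies on.
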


\begin{proof}
    We use \Cref{eq:KDI-start} as our starting point
	and use \cref{lemma:maxmin}
	\begin{align}
        K^\text{DI}(\mathcal{M},\rho)
        & \leq
        \inf_{\eps >0} 
        \limsup_{n\to\infty} 
        \inf_{(\mathcal{N},\sigma)\approx_\eps (\mathcal{M},\rho)}
        \kappa_n^\eps(\sigma)
        \\
        & \leq
        \inf_{\eps >0} 
        \inf_{(\mathcal{N},\sigma)\approx_\eps (\mathcal{M},\rho)}
        \limsup_{n\to\infty} 
        \kappa_n^\eps(\sigma).
    \intertext{We can always restrict the infimum
    to devices that are exactly equal to the original box, 
    this only reduces the set of devices}
        K^\text{DI}(\mathcal{M},\rho)
        & \leq
        \inf_{\eps >0} 
        \inf_{(\mathcal{N},\sigma)\equiv (\mathcal{M},\rho)}
        \limsup_{n\to\infty} 
        \kappa_n^\eps(\sigma).
    \intertext{Since the infimum over devices is now independent of the security parameter,
    we can now simply commute the two infima}
        K^\text{DI}(\mathcal{M},\rho)
        & \leq
        \inf_{(\mathcal{N},\sigma)\equiv (\mathcal{M},\rho)}
        \inf_{\eps >0}
        \limsup_{n\to\infty} 
        \kappa_n^\eps(\sigma)
        \\
        &=
        \inf_{(\mathcal{N},\sigma)\equiv (\mathcal{M},\rho)}
        K(\sigma)
	\end{align}
	reaching the claim.
\end{proof}

\section{Channel Private Capacity with two-way, one-way or no classical communication}

The same idea presented in this paper for states also works for the private (or secret) capacity $\mathcal{P}(\Lambda)$ of a channel $\Lambda$, and thus for the most general setting for QKD, which includes, for example, modelling the optical fiber itself instead of the states produced across the optical fiber. 

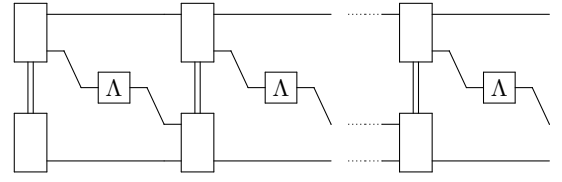
\begin{figure}
    \begin{gather*}
    \Qcircuit @C=.7em @R=.5em @!R=1em {
        \puremultigate{1}{\ } &&&&&\qw[-5]&
            \multigate{1}{\ } &&&&&\qw[-5]&
        &\dw&\dw& 
            \multigate{1}{\ } &&&&&\qw[-5]&
        \\
        \pureghost{\ }&\ar@{-}[1,1]\qw&&&&&
        \pureghost{\ }&\ar@{-}[1,1]\qw&&&&&
        &&&
        \pureghost{\ }&\ar@{-}[1,1]\qw&&&&&
        \\
        &&&\gate{\Lambda}&\ar@{-}[1,1]\qw&&
        &&&\gate{\Lambda}&\ar@{-}[1,1]\qw&&
        &&&
        &&&\gate{\Lambda}&\ar@{-}[1,1]\qw&&
        \\
        \puremultigate{1}{\ }\cwx[-2]&&&&&&
            \multigate{1}{\ }\cwx[-2]&&&&&&
        &\dw&\dw&
            \multigate{1}{\ }\cwx[-2]&&&&&&
        \\
        \pureghost{\ }&&&&&\qw[-5]&
            \ghost{\ }&&&&&\qw[-5]&
        &\dw&\dw&
            \ghost{\ }&&&&&\qw[-5]&
        }
    \end{gather*}
    \caption{\label{fig:DD-protocol}%
    A general LOPC protocol for channel $\Lambda$. 
    At the beginning and after each channel use, Alice and Bob are allowed to perform an LOPC operation to prepare the input for the next channel use. 
    In device-independent QKD, the quantum parts of the LOPC protocols are hidden behind the devices, and only classical protocols can occur outside the devices.}
\end{figure}

We mentioned that the private capacity has different versions, namely the two-way  ($\mathcal{P}_2$), one-way ($\mathcal{P}_1$), or direct ($\mathcal{P}_0$) private capacities depending on whether two-way  ($LOPC_2$) or one-way classical communication ($LOPC_1$), or only local operations (``0-way'' communication, $LOPC_0 = LO$) are allowed in the privacy protocol (practical protocols might still need communication for practical purposes, like testing, outside/around the privacy protocol). With increasing power comes increasing rates and thus $\mathcal{P}_0 \le \mathcal{P}_1 \le \mathcal{P}_2$.
These device-dependent private capacities are all defined as 
\begin{align}
\label{eq:P}
\mathcal{P}_i(\Lambda) 
&\coloneqq 
\inf_{\eps >0} \limsup_{n\to\infty} \pi^\eps_{i,n}(\Lambda),
\end{align}
where $\pi_{i,n}^\eps(\Lambda)$ is the largest $\eps$-perfect key rate obtained by the best privacy protocol (with $i$-way communication) that uses $n$ identical copies of $\Lambda$.
A general protocol around $n$ i.i.d.\ copies of $\Lambda$ is displayed in \Cref{fig:DD-protocol}.

In the device-independent channel setting, given a channel $\Lambda$ from Alice to Bob, we define an honest device for $\Lambda$ as a tuple $(\mathcal{M}, \rho, \Lambda)$, where $\rho$ is a bipartite state of Alice and the input to the channel, and $\mathcal{M}$ is a device measurement of Alice and Bob (the output of the channel).
The conditional probability distribution is then obtained via
\begin{equation*}
    p_{(\mathcal{M}, \rho, \Lambda)}(ab|xy) = \tr[(\id \otimes \Lambda)(\rho) \cdot M_a^x\otimes \tilde M_b^y],
\end{equation*}
and we have the same definitions of equality and distance for two devices.
Again we define 
\begin{align*}
    (\mathcal{M}, \rho, \Lambda)&\equiv (\mathcal{M}', \rho', \Lambda'),
    \\
    (\mathcal{M}, \rho, \Lambda)&\approx_\eps (\mathcal{M}', \rho', \Lambda')
\intertext
{as the conditions}
p_{(\mathcal{M}, \rho, \Lambda)}&= p_{(\mathcal{M}', \rho', \Lambda')},
\\
p_{(\mathcal{M}, \rho, \Lambda)}&\approx_\eps p_{(\mathcal{M}', \rho', \Lambda')}.
\end{align*}

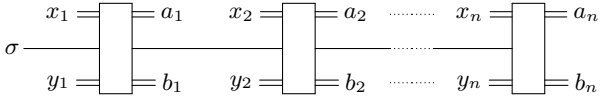
\begin{figure}
    \begin{equation*}
        \Qcircuit @C=.9em @R=.5em{
            &\push{x_1\;}&\puremultigate{2}{\ }\cw &\push{\;a_1}\cw&
            &\push{x_2\;}&\puremultigate{2}{\ }\cw &\push{\;a_2}\cw&
            &\dw&\dw
            &\push{x_n\;}&\puremultigate{2}{\ }\cw &\push{\;a_n}\cw&
            \\
            \push{\sigma\,} 
            &\qw & \ghost{\ } &\qw&\qw
            &\qw & \ghost{\ } &\qw&\qw
            &\dw&\dw
            &\qw & \ghost{\ } %
            \\
            &\push{y_1\;}& \pureghost{\ }\cw &\push{\;b_1}\cw&
            &\push{y_2\;}& \pureghost{\ }\cw &\push{\;b_2}\cw&
            &\dw&\dw
            &\push{y_n\;}& \pureghost{\ }\cw &\push{\;b_n}\cw&
        }
    \end{equation*}
    \caption{\label{fig:EAC}This device is the most general way an adversary could implement any device in DIQKD, whether the honest implementation is i.i.d., state-based, channel-based, or neither.
    Single lines are quantum systems; double lines are classical systems; $x_i,y_i$ are the inputs, and $a_i,b_i$ the outputs of the device.
    This device is almost the most general case allowed by the entropy-accumulation theorem (EAT)~\cite{ArnonFriedman2018}.
    Indeed, an entropy-accumulation channel (EAC)~\cite{ArnonFriedman2018} is obtained by joining such a device with the inputs generated by a Markov chain and copying the inputs as additional outputs.}
\end{figure}

At this point, recall that we can distinguish between the classical communication used by Alice and Bob outside the device in the classical distillation protocol and the communication used by Eve inside the device to produce the quantum states to be measured.
We will thus define various classes of devices.
Notice that all the classes of devices that we will define are a special case of the devices in \cref{fig:EAC} where the adversary is allowed to do anything in between the rounds and is only required to provide two pairs of classical input-outputs.
We remark on this because the devices of \cref{fig:EAC} are part of the so-called entropy-accumulation channels for which the best information-theoretic tool currently available obtains the largest achieved device-independent key rates~\cite{ArnonFriedman2018}.

\begin{figure}
    \begin{gather*}
    \Qcircuit @C=.7em @R=.5em @!R=1em {
        &&&&&&\push{x_1\;}&\puremultigate{1}{\ }\cw&\push{\;a_1}\cw
        &&&&&&\push{x_2\;}&\puremultigate{1}{\ }\cw&\push{\;a_2}\cw
        \\
        \puremultigate{1}{\ } &&&&&\qw[-5]&\multigate{1}{\ }&\ghost{\ }&
            \multigate{1}{\ } &&&&&\qw[-5]&\multigate{1}{\ }&\ghost{\ }&
        \qw&\dw&\dw
        \\
        \pureghost{\ }&\ar@{-}[1,1]\qw&&&&&\pureghost{\ }&&
        \pureghost{\ }&\ar@{-}[1,1]\qw&&&&&\pureghost{\ }&&
        \pureghost{\ }
        \\
        &&&\gate{\Lambda}&\ar@{-}[1,1]\qw&&&&
        &&&\gate{\Lambda}&\ar@{-}[1,1]\qw&&&&
        \\
        \puremultigate{1}{\ }\cwx[-2]&&&&&&\multigate{1}{\ }\cwx[-2]&&
        \puremultigate{1}{\ }\cwx[-2]&&&&&&\multigate{1}{\ }\cwx[-2]&&
        \\
        \pureghost{\ }&&&&&\qw[-5]&\ghost{\ }&\multigate{1}{\ }&
            \ghost{\ }&&&&&\qw[-5]&\ghost{\ }&\multigate{1}{\ }&
        \qw&\dw&\dw
        \\
        &&&&&&\push{y_1\;}&\pureghost{\ }\cw&\push{\;b_1}\cw
        &&&&&&\push{y_2\;}&\pureghost{\ }\cw&\push{\;b_2}\cw
        }
    \end{gather*}
    \caption{\label{fig:DI2}%
    A $DI_2$ device, a channel device on i.i.d.\ copies of the channel $\Lambda$ that uses two-way public communication and memory between the rounds to generate the input state for the channel uses.
    To obtain the other $DI_j$ devices, restrict the classical communication (double lines).
    To get the $IDI_j$ devices, remove the memory (single lines) connecting the input/output rounds to the next channel input.
    The cLOPC protocols used by Alice and Bob to distill the key connect to the classical lines in this diagram, i.e., by connecting the cLOPC protocol to the input-output rounds as done in \cref{fig:DI-protocol}}
\end{figure}
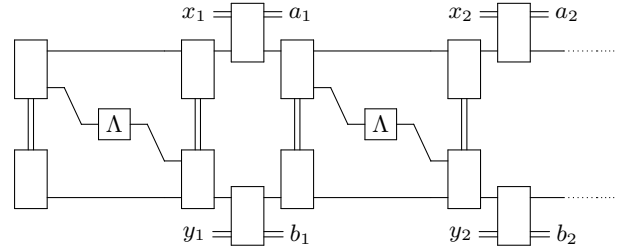

\begin{figure}
    \begin{equation*}
    \Qcircuit @C=.7em @R=.5em @!R=1em {
        &&&&&&\push{x_1\;}&\puremultigate{1}{\ }\cw&\push{\;a_1}\cw
        &&&&&&\push{x_2\;}&\puremultigate{1}{\ }\cw&\push{\;a_2}\cw
        \\
        \puremultigate{1}{\ } &&&&&\qw[-5]&\qw&\ghost{\ }&
            \multigate{1}{\ } &&&&&\qw[-5]&\qw&\ghost{\ }&
        \qw&\dw&\dw
        \\
        \pureghost{\ }&\ar@{-}[1,1]\qw&&&&&\pureghost{\ }&&
        \pureghost{\ }&\ar@{-}[1,1]\qw&&&&&\pureghost{\ }&&
        \pureghost{\ }
        \\
        &&&\gate{\Lambda}&\ar@{-}[1,1]\qw&&&&
        &&&\gate{\Lambda}&\ar@{-}[1,1]\qw&&&&
        \\
        &&&&&&\multigate{1}{\ }&&
        &&&&&&\multigate{1}{\ }&&
        \\
        &&&&&   &\pureghost{\ }&\multigate{1}{\ }&
        &&&&&\qw[-6]&\ghost{\ }&\multigate{1}{\ }&
        \qw&\dw&\dw
        \\
        &&&&&&\push{y_1\;}&\pureghost{\ }\cw&\push{\;b_1}\cw
        &&&&&&\push{y_2\;}&\pureghost{\ }\cw&\push{\;b_2}\cw
        }
    \end{equation*}
    \caption{\label{fig:DI0}%
    A $DI_0$ device, a restriction of $DI_2$ devices where the adversary is not allowed to use classical communication between the two sides of the device in the preparation of the states with the i.i.d.\ channel $\Lambda$. 
    Here the input state is generated at each round from a quantum memory, while in the main text, we are also restricted to single-copy i.i.d.\ states ($IDI_0$ devices).
    Both lead to the same bound.
    A cLOPC protocol around this device is still a particular qLOPC protocol around $\Lambda^{\otimes n}$. }
\end{figure}
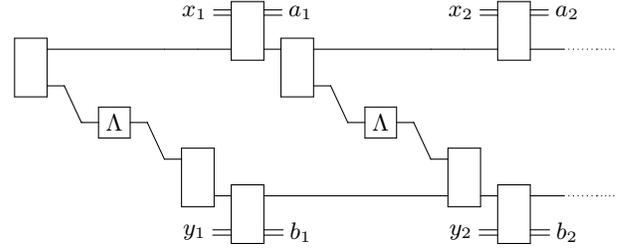

We denote $DI_0$, $DI_1$ and $DI_2$ the devices where the channel is i.i.d., memory is allowed, and respectively use none, one-way or two-way public communication between the input-output rounds. 
Notice that this communication does not happen between Alice and Bob giving their classical inputs and receiving their classical outputs (which would not allow for device independence), but either before the inputs are given or after the outputs are obtained.
The largest of the classes, $DI_2$, is displayed in \cref{fig:DI2}.
The $DI_j$ devices can still share memory locally at Alice and Bob across each round.
Thus we can further restrict the adversary as mentioned above and define the (i.i.d.-device independent) variants $IDI_0$, $IDI_1$, and $IDI_2$, where the devices are i.i.d.\ and are not allowed memory or communication from one round to the next. 
Notice that the i.i.d.\ assumption in the case of channels is much stronger and is unnatural compared to the state case because even if the channel itself in the device is i.i.d., the device might be not-i.i.d.\ because of the input state.
In general, even Alice and Bob need to use non-i.i.d input states to achieve the private capacity~\cite{SRS08}.
In contrast, in the state case, the de Finetti reduction shows that we can assume i.i.d input states without loss of generality~\cite{Renner07,CKR09}.
Therefore our use of $IDI_0$ devices is purely technical.
\Cref{fig:DI0} displays the class of devices $DI_0$, and cutting the lines connecting the input-output measurement with the next round produces $IDI_0$.

We can now define the corresponding device-independent private capacity for each choice of adversarial devices and each choice of available communication to Alice and Bob.
In order to do that, we want to change $\pi_{i,n}^\eps$, as for the state scenario, and define for $i,j= 0,1,2$
\begin{align}
\label{eq:PDI}
\mathcal{P}_i^{DI_j}(\mathcal{M},\rho,\Lambda) 
&\coloneqq 
\inf_{\eps >0} \limsup_{n\to\infty} \pi^{DI_j,\eps}_{i,n}(\mathcal{M},\rho,\Lambda),
\end{align}
where $\pi^{DI_j,\eps}_{i,n}$ will be the largest key rate optimized over privacy protocols, this time also including a minimization over the possible devices in $DI_j$ that are compatible with the honest device. 
Again, the upper bounds that we are interested in are upper bounds for all these capacities, and thus for simplicity, it will suffice to focus on just the i.i.d.\ devices $IDI_j$.
This will allow us to define just $\pi^{IDI_j,\eps}_{i,n}$, which is less cumbersome than defining $\pi^{DI_j,\eps}_{i,n}$.
The largest of these capacities is $\mathcal{P}_2^{IDI_0}$, since i.i.d.\ devices, larger $i$ and smaller $j$ make for larger rates. 
Private capacities with $j<i$ are arguably less meaningful because they allow less classical communication between the devices than is permitted to Alice and Bob.
Therefore, they should be considered for completeness and more as mathematical tools.

Each device-independent private capacity $\mathcal{P}_i^{DI_j}$ or $\mathcal{P}_i^{IDI_j}$  is upper bounded by a device-dependent capacity, as a consequence of defining and upper bounding the corresponding finite rates as follows.
Each choice of $i$ and $j$ defines the finite key rates rates $\pi^{DI_j,\eps}_{i,n}$ and $\pi^{DI_j,\eps}_{i,n}$, both bounded by a device-dependent finite key rate, by making the device-independent protocol, together with the state and measurement of the device, a specific device-dependent protocol: 
\begin{align}
\nonumber
\pi&{}^{DI_j,\eps}_{i,n}(\mathcal{M},\rho,\Lambda) 
\\
&\leq \pi^{IDI_j,\eps}_{i,n}(\mathcal{M},\rho,\Lambda) 
\\
&\coloneqq
\sup_{\Pi\in cLOPC_i} 
\inf _{\substack{
    (\mathcal{N},\sigma, \Lambda') \in IDI_j
    \\ 
    (\mathcal{N},\sigma, \Lambda' )\approx_\eps (\mathcal{M},\rho, \Lambda) 
    }}
\kappa^\eps_n(\Pi,(\mathcal{N},\sigma,\Lambda'))
\allowdisplaybreaks
\\
&\leq
\inf _{\substack{
    (\mathcal{N},\sigma, \Lambda') \in IDI_j
    \\ 
    (\mathcal{N},\sigma, \Lambda' )\approx_\eps (\mathcal{M},\rho, \Lambda) 
    }}
\sup_{\Pi\in cLOPC_i} 
\kappa^\eps_n(\Pi,(\mathcal{N},\sigma,\Lambda'))
\\
&\leq
\inf _{\substack{
    (\mathcal{N},\sigma, \Lambda') \in IDI_j
    \\ 
    (\mathcal{N},\sigma, \Lambda' )\approx_\eps (\mathcal{M},\rho, \Lambda) 
    }}
\sup_{\Pi' \in qLOPC_{\max\{i,j\}}} 
\kappa^\eps_n(\Pi',\Lambda')
\label{eq:maxij}
\\
&=
\inf _{\substack{
    (\mathcal{N},\sigma, \Lambda') \in IDI_j
    \\ 
    (\mathcal{N},\sigma, \Lambda' )\approx_\eps (\mathcal{M},\rho, \Lambda) 
    }}
\pi^\eps_{\max\{i,j\},n}(\Lambda')
\\
&\leq
\inf _{\substack{
    (\mathcal{N},\sigma, \Lambda') \in IDI_j
    \\ 
    (\mathcal{N},\sigma, \Lambda' )\approx_\eps (\mathcal{M},\rho, \Lambda) 
    }}
\pi^\eps_{2,n}(\Lambda')
\end{align}
where $\kappa^\eps_n$ is the rate of achieved $\eps$-perfect key, while $\pi^\eps_{\ell, n}$ is the device-dependent finite rate already optimized over $\ell$-way protocols acting on $n$ copies of $\Lambda'$.
Taking the limits, and with the same arguments as \Cref{thm:main}, gives upper bounds of the device-independent capacities $\mathcal{P}_i^{DI_j}$ in terms of optimized device-dependent capacities $\mathcal{P}_{\max\{i,j\}}$:
\begin{align}
    \mathcal{P}_i^{DI_j} (\mathcal{M},\rho, \Lambda) \leq 
    \inf _{\substack{
        (\mathcal{N},\sigma, \Lambda') \in IDI_j
        \\ 
        (\mathcal{N},\sigma, \Lambda' )\eqq(\mathcal{M},\rho, \Lambda) 
        }} 
    \mathcal{P}_{\max\{i,j\}}(\Lambda'),
\end{align}
and in particular, with $\ell \coloneqq\max\{i,j\}$,
\begin{align}
    \mathcal{P}_i^{DI_j} (\Lambda) 
    &\coloneqq \sup_{\mathcal{M},\rho} \mathcal{P}_i^{DI_j} (\mathcal{M},\rho, \Lambda)
    \\
    \leq \mathcal{P}_{\ell}^{\downarrow_j}(\Lambda) 
    &\coloneqq 
    \sup_{\mathcal{M},\rho} 
    \inf _{\substack{
        (\mathcal{N},\sigma, \Lambda') \in IDI_j
        \\ 
        (\mathcal{N},\sigma, \Lambda' )\eqq(\mathcal{M},\rho, \Lambda) 
        }} 
    \mathcal{P}_{\ell}(\Lambda')
    \\
    \leq \mathcal{P}_{2}^{\downarrow_0}(\Lambda) 
\end{align}

\begin{figure}
    \centering
    \newcommand{\vgeq}{\rotatebox{90}{$\leq$}}
    \begin{tikzpicture}[xscale=2,yscale=-1]
    \foreach \i in {0,1,2} {
        \foreach \j in {0,1,2} {
            \node (p\i\j) at (\i,\j) {$\mathcal{P}_{\i}^ {IDI_\j}$};
        }
    }

    \foreach \j/\k in {0/1,1/2} {
        \node (P2\k) at (3 ,\k) {$\mathcal{P}_{ 2}^ {\downarrow_\k}$};
        \node (P\j0) at (\j,-1) {$\mathcal{P}_{\j}^ {\downarrow_0 }$};

        \node at (\j.5,0) {$\leq$};
        \node at (2,\j.5) {\vgeq};
        \foreach \i in {0,1,2} {
            \node at (\i.5,\k) {$\leq$};
            \node at (\j,\i-.5) {\vgeq};
        }
    }

    \node (P20) at (3,-1) {$\mathcal{P}_{2}^ {\downarrow_0}$};
    \node at (3,1.5) {\vgeq};
    \node at (.5,-1) {$\leq$};
    
    \node at (3,-.3) {\vgeq};
    \node at (2.3,-1) {$\leq$};
    \node at (2.5,-.5) {\rotatebox{45}{$\leq$}};

    \end{tikzpicture}
    \caption{Relationship between the device-independent channel capacities.}
    \label{fig:channel-capacities}
\end{figure}

\begin{remark*}
Notice that we did not prove that $\mathcal{P}_i^{DI_j} \leq \mathcal{P}_{i}^{\downarrow_j}$, the crucial step being \eqref{eq:maxij}.
Take the example of $i=1$ and $j=2$: Alice and Bob are allowed only one-way communication, whereas the devices could use two-way communication.
The possibility of $\mathcal{P}_1^{DI_2} > \mathcal{P}_{1}^{\downarrow_2}$ means that the added power of two-way communication could allow the device to switch to a channel $\Lambda'$ with bad one-way private capacity, e.g., $\mathcal{P}_1(\Lambda')=0$.
However, to mimic the statistics of the original channel, some key needs to be extracted using the two-way communication increasing $\mathcal{P}_1^{DI_2}$ but not $\mathcal{P}_1^{\downarrow_2}$.
Still, this could be a better attack than simply finding the worst replacement using only one-way communication.

In other words, we could have the following situation. 
We could have $\mathcal{P}_1^{DI_2}(\Lambda)>0$, in particular $\mathcal{P}_1(\Lambda)> 0$, meaning that the channel has good and verifiable one-way private capacity. 
At the same time we could have a channel $\Lambda'$ with $\mathcal{P}_1(\Lambda)=0$ and $\mathcal{P}_2(\Lambda)>0$ that can simulate $\Lambda$ with two-way communication.
This, in particular, would mean that in order to simulate $\Lambda$ with $\Lambda'$ some key must be distilled with two-way communication.
\end{remark*}

We can then define different variants of the device-independent entanglement measures of channels, namely for a measure $\mathcal{E}(\Lambda)$ we can define different device-independent optimizations $\mathcal{E}^{\downarrow_0}$, $\mathcal{E}^{\downarrow_1}$ and $\mathcal{E}^{\downarrow_2}$ depending on the communication allowed in the device. 
Since, as an argument to the entanglement measure we are concerned with only one copy of the channel, the devices in the optimization in $\mathcal{E}^{\downarrow_j}$ are $IDI_j$ devices. 
Then, any upper bound $\mathcal{P}_{\max\{i,j\}}\leq \mathcal{E}$ leads to a device-independent upper bound $\mathcal{P}_i^{DI_j} \leq \mathcal{E}^{\downarrow_j}$.

We can finally make the same use of the partial transpose map, which we denote with $\theta$ ($\theta(\rho) = \rho^ T$).
If a channel $\Lambda$ is such that $\theta\circ\Lambda$ is also a channel, then any device for $\Lambda$ can be transformed into a device for $\theta\circ\Lambda$ with the exact same statistics;
the case of $IDI_0$ is shown in \Cref{fig:IDI0}. The consequence is the analogous of \cref{eq:pptupperbound}:
\begin{align}
    \mathcal{P}_i^{DI_j} (\Lambda) 
    \leq \mathcal{P}_{\max\{i,j\}}(\theta\circ\Lambda) 
    \leq \mathcal{P}_{2}(\theta\circ\Lambda)
       , \mathcal{E}^{\downarrow_j}(\theta\circ\Lambda),
\end{align}
which is the claim of the main text.
We would like to recall the conclusion from the main text, that it is enough to consider fully i.i.d.\ devices because the transposition map itself is i.i.d., meaning that the $n$-fold transposition is a tensor product of single channel transpositions.
Examples of channels for which $\mathcal{P}_1(\Lambda)$ is large but $\mathcal{P}_2(\theta\circ\Lambda)$ is small can be found in~\cite{christandl2017relative}.

\end{document}